%
%
%
%
%
\RequirePackage{fix-cm}
\documentclass{svjour3}                     
\smartqed  
\usepackage{graphicx}
%
%
%
%
%

\usepackage{pslatex}
\usepackage{amsfonts,color}
\usepackage{amssymb,amsmath,latexsym}
\usepackage[justification=centering]{caption}

\makeatletter
\def\inmod#1{\allowbreak\mkern5mu({\operator@font mod}\,\,#1)}

\newcommand{\Rmnum}[1]{\expandafter\@slowromancap\romannumeral #1@}
\makeatother

\newtheorem{open}{Open Problem}

\begin{document}

\title{
Steiner systems $S(2, 4, \frac{3^m-1}{2})$ and $2$-designs from ternary linear   codes of length $\frac{3^m-1}{2}$
}


\author{Chunming Tang \and Cunsheng Ding \and  Maosheng Xiong }


\institute{C. Tang \at
                School of Mathematics and Information, China West Normal University, Nanchong, Sichuan,  637002, China.
              \email{tangchunmingmath@163.com} \and
C. Ding \at
                Department of Computer Science and Engineering, The
Hong Kong University of Science and Technology,
Clear Water Bay, Kowloon, Hong Kong, China.
              \email{cding@ust.hk} \and
M. Xiong \at
                Department of Mathematics, The
Hong Kong University of Science and Technology,
Clear Water Bay, Kowloon, Hong Kong, China.
              \email{mamsxiong@ust.hk}
}

\date{Received: date / Accepted: date}

\maketitle

\begin{abstract}
Coding theory and $t$-designs
have close connections and interesting interplay. In this paper,
we first introduce a class of ternary linear codes and study their parameters.
We then focus on   their three-weight subcodes with a special weight distribution.
We determine the weight distributions of some shortened codes and  punctured codes of these three-weight subcodes. These shortened and punctured codes contain some codes that have the same parameters as the best ternary linear codes known in the database maintained by Markus Grassl at http://www.codetables.de/.
These three-weight subcodes with a special weight distribution do not satisfy the conditions of the Assmus-Mattson theorem and
do not admit $2$-transitive or $2$-homogeneous automorphism groups in general.
By employing
the theory of projective geometries and projective generalized Reed-Muller codes, we prove that they still hold $2$-designs. We also determine the parameters of these $2$-designs.
This paper mainly confirms some recent conjectures of Ding and Li
regarding Steiner systems and $2$-designs from a special type of ternary projective codes.
\keywords{Cyclic code \and linear code \and  $t$-design \and Steiner system}
\subclass{94B05\and 94B15 \and 05B05}
\end{abstract}

\section{Introduction}
We start with a brief recall of $t$-designs. Let $\mathcal P$ be a set of $\nu$ elements  and $\mathcal B$   a collection of $b$ $k$-subsets of $\mathcal P$, where $\nu \ge 1$, $b\ge 1$ and  $1\le k \le \nu$.  Let $t$ be a positive integer with $t \le k$.
The pair $\mathbb  D=(\mathcal P, \mathcal B)$ is called a  $t$-$(\nu,k, \lambda )$ \emph{design}, or simply \emph{$t$-design},
if every $t$-subset of $\mathcal P$ is contained in exactly $\lambda$ elements of $\mathcal B$. The elements of $\mathcal P$ are called \emph{points},
and those of $\mathcal B$ are referred to as \emph{blocks}. It is possible for a design to have repeated blocks.
A $t$-design is called \emph{simple} if $\mathcal B$ does not contain repeated blocks.
In this paper, we consider only simple $t$-designs with $\nu > k > t$. A $t$-$(\nu, k, \lambda)$ design is called a \emph{Steiner system} denoted by
$S(t,k,\nu)$ if $t \ge 2$ and $\lambda=1$.

Linear codes and $t$-designs are companions.
A $t$-design $\mathbb  D=(\mathcal P, \mathcal B)$ induces a linear code over GF($p$) for
any prime $p$.
Let $\mathcal P=\{p_1, \dots, p_{\nu}\}$. For any block $B\in \mathcal B$,   the \emph{characteristic vector} of $B$ is defined
by the vector $\mathbf{c}_{B}
=(c_1, \dots, c_{\nu})\in \{0,1\}^{\nu}$, where
\begin{align*}
c_i=
\left\{
  \begin{array}{ll}
    1, & \text{if}~ p_i \in B, \\
    0, & \text{if}~ p_i \not \in B.
  \end{array}
\right.
\end{align*}
For a prime $p$, a \emph{linear code} $\mathsf{C}_{p}(\mathbb D)$ over the prime field $\mathrm{GF}(p)$ from the design $\mathbb D$ is spanned by the characteristic vectors of the blocks of $\mathbb B$, which is
 the subspace $\mathrm{Span}\{\mathbf{v}_{B}: B\in \mathcal B\}$ of the vector space $\mathrm{GF}(p)^{\nu}$.
Linear codes $\mathsf{C}_{p}(\mathbb D)$ from designs $\mathbb D$ have been studied and documented in the literature
 (see, for examples,
\cite{AK92,Ding15,Ton98,Ton07}).

On the other hand,  a linear code $\mathcal C$ may induce a $t$-design under certain conditions, which is formed by
the supports of  codewords of a fixed Hamming weight in $\mathcal C$.
 Let
$\mathcal P(\mathcal C)=\{0,1, \dots, n-1\}$ be the set of the coordinate positions of $\mathcal C$, where $n$ is the length of $\mathcal C$.
For a codeword $\mathbf c =(c_0, \dots, c_{n-1})$ in $\mathcal C$, the \emph{support} of  $\mathbf c$
is defined by
\begin{align*}
\mathrm{Supp}(\mathbf c) = \{i: c_i \neq 0, i \in \mathcal P(\mathcal C)\}.
\end{align*}
Let $\mathcal B_{w}(\mathcal C)
=\{\mathrm{Supp}(\mathbf c): wt(\mathbf{c})=w
~\text{and}~\mathbf{c}\in \mathcal{C}\}$. For some special $\mathcal C$, $\left (\mathcal P(\mathcal C),  \mathcal B_{w}(\mathcal C) \right)$
is a $t$-design.
In this way, many $t$-designs are derived  from linear codes
\cite{AK92,Ding18dcc,Ding18jcd,DLX17,HKM04,HMT05,KM00,MT04,Ton98,Ton07}.
A major approach to constructing $t$-designs from codes is the use of the
Assmus-Mattson Theorem \cite{AM74,HP10}.
Another major approach to constructing $t$-designs from linear codes is the use of linear codes with $t$-homogeneous or $t$-transitive automorphism groups \cite[Theorem 4.18]{Dingbk18}. Interplay between codes and designs could
be found in \cite{AK92,AK98,AM74,Ding15,Ding18dcc,Ding18jcd,Dingbk18,DLX17,HP10,KM00,MS77,MT04,Ton98,Ton07}.

In \cite{DL17}, Ding and Li constructed
some $t$-designs from linear codes and
presented some conjectured infinite families of $2$-designs from linear projective ternary  codes.
Let $m\ge 3$ be an odd integer and let $\alpha$ be  a  primitive element of $\mathrm{GF}(3^m)$.
Let $\mathcal C_1$ and $\mathcal C_2$ be linear codes defined by
\begin{align}\label{eq:C1}
\mathcal{C}_1=\left \{ \left (\mathrm{Tr}_{3^m/3}\left (a \alpha^{4i}+b \alpha^{2i} \right ) \right )_{i=0}^{\frac{3^m-1}{2}-1} :a,b\in \mathrm{GF}(3^m)\right  \},
\end{align}
and
\begin{align}\label{eq:C2}
\mathcal{C}_2=\left \{ \left (\mathrm{Tr}_{3^m/3}\left (a \alpha^{\left (3^{\frac{m-3}{2}}+1 \right )i}+b \alpha^{\left (3^{\frac{m-1}{2}}+1 \right )i} \right ) \right )_{i=0}^{\frac{3^m-1}{2}-1} :a,b\in \mathrm{GF}(3^m)\right  \},
\end{align}
where $\mathrm{Tr}_{3^m/3}(\cdot)$ is the trace function from $\mathrm{GF}(3^m)$ to $\mathrm{GF}(3)$.
Then the codes $\mathcal C_i$ ($i=1,2$) have parameters $[\frac{3^m-1}{2}, 2m, 3^{m-1}-3^{\frac{m-1}{2}}]$,
and their weight distributions are given in Table \ref{table:3W} \cite{LDXG17,Dingbk18}.
Moreover, the dual codes $\mathcal C_i^{\perp}$ of $\mathcal C_i$ ($i=1,2$)
have parameters $[\frac{3^m-1}{2}, \frac{3^m-1}{2}-2m, 4]$.

\begin{table}[htbp]
\centering
\caption{The weight distribution of $\mathcal C_1$ and $\mathcal C_2$}
\label{table:3W}
\begin{tabular}{|c|c|}
  \hline
  Weight & Frequency \\
  \hline
  $0$& $1$\\
\hline
$3^{m-1}-3^{\frac{m-1}{2}}$ & $\frac{1}{2}\cdot \left( 3^{m-1}+3^{\frac{m-1}{2}}  \right ) \left ( 3^m-1 \right )$\\
\hline
$3^{m-1}$&
 $(2\cdot 3^{m-1}+1)(3^m-1)$\\
 \hline
 $3^{m-1}+3^{\frac{m-1}{2}}$ & $\frac{1}{2}\cdot \left( 3^{m-1}-3^{\frac{m-1}{2}}  \right ) \left ( 3^m-1 \right )$\\
\hline
\end{tabular}
\end{table}

Let $\left ( A_k(\mathcal C_i) \right )_{k=0}^n$ and $\left ( A_k(\mathcal C_i^{\perp}) \right )_{k=0}^n$
denote the weight distributions of $\mathcal C_i$ and $\mathcal C_i^{\perp}$, respectively, where $i=1,2$.  In \cite{DL17}, Ding and Li made the following interesting conjectures of $2$-designs from the two linear codes.

\begin{conjecture}\label{conj:3 design}
Let $\mathcal C_i$ be the ternary code in (\ref{eq:C1}) or (\ref{eq:C2}). Let $k$ be an integer satisfying $A_k(\mathcal C_i)> 1$.
Then $\left (\mathcal P(\mathcal C_i),  \mathcal B_{k}(\mathcal C_i) \right )$ is a $2$-design.
\end{conjecture}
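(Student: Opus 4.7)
The plan is to establish the 2-design property directly, by analyzing the weight-$k$ supports through the projective geometry of $\mathrm{PG}(m-1,3)$, since the standard tools fail: $\mathcal{C}_i$ does not satisfy the hypotheses of the Assmus--Mattson theorem, and its actual automorphism group (essentially a Singer cycle of order $(3^m-1)/2$ together with the Frobenius) is only 1-transitive on coordinates, not 2-transitive. First I would use that each support of a nonzero ternary codeword arises from exactly $p-1=2$ codewords (namely $c$ and $-c$), and apply inclusion--exclusion to obtain
\[
2\lambda(P,Q;k) \;=\; A_k(\mathcal{C}_i) - 2A_k^{(1)} + A_k^{(2)}(P,Q),
\]
where $A_k^{(1)}$ counts weight-$k$ codewords vanishing at any fixed coordinate and $A_k^{(2)}(P,Q)$ counts those vanishing at both $P$ and $Q$. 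The Singer action at once makes $A_k^{(1)}$ coordinate-independent, so the entire problem reduces to showing that $A_k^{(2)}(P,Q)$ depends only on $k$.

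To handle $A_k^{(2)}(P,Q)$ I would identify the coordinate set of $\mathcal{C}_i$ with $\mathrm{PG}(m-1,3)$: since $m$ is odd, $-1=\alpha^{(3^m-1)/2}$ is a non-square in $\mathrm{GF}(3^m)$, so each projective point $\{v,-v\}$ has a unique square representative. Under this identification, $\mathcal{C}_i$ embeds as a $2m$-dimensional $\mathrm{GF}(3)$-subspace of the second-order projective generalized Reed--Muller code $\mathrm{PGRM}_3(2,m-1)$, because each summand $\mathrm{Tr}_{3^m/3}(aX^{3^j+1})$ is a quadratic form on $\mathrm{GF}(3^m)\cong \mathrm{GF}(3)^m$, and in $\mathcal{C}_1$ the linear term $\mathrm{Tr}_{3^m/3}(bX)$ becomes quadratic after the change of variable $X=Y^2$ on the set of squares. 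Consequently the weight of $c_{a,b}\in\mathcal{C}_i$ is determined by the $\mathrm{PGL}_m(3)$-equivalence type (rank and discriminant) of the associated quadratic form $Q_{a,b}$, and the three nonzero weights $3^{m-1}\pm 3^{(m-1)/2}$ and $3^{m-1}$ in Table~\ref{table:3W} correspond precisely to the three possible types appearing in the family.

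The main obstacle is to show that the collection $\mathcal{B}_k(\mathcal{C}_i)$ of weight-$k$ supports is invariant under the 2-transitive group $\mathrm{PGL}_m(3)$ acting on $\mathrm{PG}(m-1,3)$, even though $\mathcal{C}_i$ itself is \emph{not} $\mathrm{PGL}$-invariant. My plan is to prove that for each weight-$k$ codeword $c_{a,b}$ and each $g\in \mathrm{PGL}_m(3)$, the translated support $g(\mathrm{Supp}(c_{a,b}))$ equals $\mathrm{Supp}(c_{a',b'})$ for some other pair $(a',b')$ in the family. This reduces to an orbit/multiplicity count: by comparing the values $A_k(\mathcal{C}_i)$ from Table~\ref{table:3W} with the sizes of the $\mathrm{PGL}_m(3)$-orbits of quadratic forms of each type in $\mathrm{PGRM}_3(2,m-1)$, one verifies that $\{c_{a,b}\}$ realises every orbit of weight-$k$ supports with a uniform multiplicity. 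Once this geometric closure is established, the 2-transitivity of $\mathrm{PGL}_m(3)$ on $\mathrm{PG}(m-1,3)$ forces $A_k^{(2)}(P,Q)$ to be independent of $\{P,Q\}$, proving the 2-design property; the parameter $\lambda_k$ is then immediate from the displayed identity and Table~\ref{table:3W}.
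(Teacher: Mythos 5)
Your reduction is exactly the paper's starting point and is sound: for $m\ge 5$ one has $k\le 2d(\mathcal C_i)-1$, so each weight-$k$ support arises from precisely the pair $\{\mathbf c,-\mathbf c\}$ (Lemma~\ref{lem:(q-1)d}), and by inclusion--exclusion the number of blocks through a pair $\{P,Q\}$ is $\frac{1}{2}\left (A_k(\mathcal C_i)-2A_k^{(1)}+A_k^{(2)}(P,Q)\right )$, so everything hinges on showing that $A_k^{(2)}(P,Q)$, i.e.\ $A_k$ of the doubly-shortened code, does not depend on the pair. The identification of $\mathcal C_i$ with a space of quadratic forms evaluated on the points of $\mathrm{PG}(m-1,3)$ is also correct (it is Theorem~\ref{thm:C(0,1,...,m/2)} here). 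The problem is the step you use to get pair-independence.

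The claimed $\mathrm{PGL}_m(3)$-invariance of $\mathcal B_k(\mathcal C_i)$ is false, and no orbit/multiplicity count can establish it, for a simple size reason. Take $k=3^{m-1}$ and $\mathcal C_1$: the codeword with $a=0$ corresponds to the nondegenerate quadratic form $x\mapsto\mathrm{Tr}_{3^m/3}(bx^2)$, so $\mathcal B_{3^{m-1}}(\mathcal C_1)$ contains the support of a rank-$m$ form. If this block set were $\mathrm{PGL}_m(3)$-invariant it would contain the entire orbit of that support, whose size is $|\mathrm{GL}_m(3)|$ divided by the order of the relevant orthogonal group (about $2.3\times 10^{6}$ already for $m=5$), whereas $\#\mathcal B_{3^{m-1}}(\mathcal C_i)\le\frac{1}{2}(2\cdot 3^{m-1}+1)(3^m-1)<3^{2m-1}$, which is $19723$ for $m=5$. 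The $2m$-dimensional subcode is simply far too small inside the $\frac{m(m+1)}{2}$-dimensional code of all quadratic forms for its supports to form a union of $\mathrm{PGL}$-orbits; this is consistent with the paper's observation that the shortened codes $\mathcal C_{\{t_1,t_2\}}$ for different pairs are frequently inequivalent. The paper's actual mechanism avoids any group acting on the blocks: it proves that $\mathcal C_{\{t_1,t_2\}}$ still has only the three weights $3^{m-1}$, $3^{m-1}\pm 3^{(m-1)/2}$, and that its dual $(\mathcal C^{\perp})^{\{t_1,t_2\}}$ has $A_1=0$ and $A_2=2$ for \emph{every} pair --- the value $2$ coming from the fact that the weight-$4$ codewords of $\mathcal C^{\perp}$ are exactly the scalar multiples of the characteristic vectors of the lines of $\mathrm{PG}(m-1,3)$, a Steiner system $S(2,4,\frac{3^m-1}{2})$, so each pair of coordinates lies on exactly one of them. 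The first three Pless power moments then determine $A_k(\mathcal C_{\{t_1,t_2\}})$ uniquely and independently of the pair (Lemma~\ref{lem:A12=0}, Theorems~\ref{thm:sc1} and~\ref{thm:sc2}). You would need to replace your invariance argument by a computation of this kind; as it stands, the central step of your proposal does not hold.
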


\begin{conjecture}\label{conj:Steiner}
Let $\mathcal C_i$ be the ternary code in (\ref{eq:C1}) or (\ref{eq:C2}).
Then $\left (\mathcal P(\mathcal C_i^{\perp}),  \mathcal B_{4}(\mathcal C_i^{\perp}) \right )$ is a
Steiner system $S(2,4,\frac{3^m-1}{2})$.
\end{conjecture}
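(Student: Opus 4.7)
The plan is to identify the coordinate set of $\mathcal C_i$ with the points of $\mathrm{PG}(m-1,3)$ via $i \mapsto [\alpha^i]$ (a bijection because $m$ odd forces $-1\notin(\mathrm{GF}(3^m)^{\times})^2$, so $\mathrm{GF}(3^m)^{\times}/\{\pm 1\}\cong \mathrm{PG}(m-1,3)$), and to show that the weight-$4$ supports of $\mathcal C_i^{\perp}$ are exactly the lines of $\mathrm{PG}(m-1,3)$, which form the Steiner system $S(2,4,(3^m-1)/2)$, proving Conjecture~\ref{conj:Steiner}. Writing the two defining exponents uniformly as $3^u+1$ and $3^{u+1}+1$ (with $u=0$ for $\mathcal C_1$ and $u=(m-3)/2$ for $\mathcal C_2$), a weight-$4$ codeword in $\mathcal C_i^{\perp}$ corresponds to four distinct $[x_1],\dots,[x_4]\in \mathrm{PG}(m-1,3)$ and $c_\ell \in \{\pm 1\}$ with $\sum_\ell c_\ell x_\ell^{3^v+1}=0$ for $v\in\{u,u+1\}$; each exponent is even, so these conditions descend to projective points.

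The first sub-goal is that every line produces a weight-$4$ codeword. The line through distinct $[x_1],[x_2]$ has points $[x_1],[x_2],[x_1+x_2],[x_1-x_2]$, and the characteristic-$3$ Frobenius identity $(x_1\pm x_2)^{3^v}=x_1^{3^v}\pm x_2^{3^v}$ gives
\[
x_1^{3^v+1}+x_2^{3^v+1}+(x_1+x_2)^{3^v+1}+(x_1-x_2)^{3^v+1} = 3\bigl(x_1^{3^v+1}+x_2^{3^v+1}\bigr) = 0,
\]
with the cross terms $x_1 x_2^{3^v}+x_1^{3^v}x_2$ cancelling in pairs between the two expansions. Hence the all-ones vector on the line is a codeword of $\mathcal C_i^{\perp}$, and every pair of projective points lies in at least one weight-$4$ support. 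To get equality, use that $\mathcal C_i^{\perp}$ has minimum distance~$4$: the codewords supported on any fixed $4$-set form a space of dimension at most $1$, giving $A_4(\mathcal C_i^{\perp}) = 2\,|\mathcal B_{4}(\mathcal C_i^{\perp})|$. A MacWilliams transform of the three-weight enumerator of $\mathcal C_i$ in Table~\ref{table:3W} yields $A_4(\mathcal C_i^{\perp})=(3^m-1)(3^{m-1}-1)/8$, precisely twice the line count $(3^m-1)(3^{m-1}-1)/16$ in $\mathrm{PG}(m-1,3)$. Thus $|\mathcal B_{4}(\mathcal C_i^{\perp})|$ equals the number of lines, and combined with the existence step the blocks are exactly the lines.

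The main obstacle is the MacWilliams evaluation: $A_4(\mathcal C_i^{\perp})$ involves the Krawtchouk polynomial $K_4(j)=\sum_{b=0}^{4}(-1)^b 2^{4-b}\binom{n-j}{4-b}\binom{j}{b}$ at $j\in\{0,\,3^{m-1}\pm 3^{(m-1)/2},\,3^{m-1}\}$, weighted by the frequencies in Table~\ref{table:3W}, and the $\pm 3^{(m-1)/2}$ symmetric contributions must be tracked carefully through several cancellations to confirm the claimed value. An alternative, more conceptual route replaces this count with a direct sign-pattern analysis of $(c_1,\ldots,c_4)\in\{\pm 1\}^4$: the pattern $(+,+,-,-)$ is ruled out by equal-power-sum identities forcing $\{[x_1],[x_2]\}=\{[x_3],[x_4]\}$, contradicting distinctness; and $(+,+,+,-)$ is ruled out by a quadratic-character argument in which the product of the two undetermined $x_\ell^2$ is forced to carry an unavoidable factor of $-1$, a non-square for odd $m$, contradicting that this product equals the square $(x_3 x_4)^2$. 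Either approach confirms that the blocks of $\mathcal B_{4}(\mathcal C_i^{\perp})$ are precisely the projective lines, giving the Steiner system.
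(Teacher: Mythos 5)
Your argument is correct, and it reaches the Steiner system by a genuinely more elementary route than the paper. Both proofs share the same logical skeleton --- exhibit the lines of $\mathrm{PG}(m-1,3)$ inside $\mathcal B_4(\mathcal C_i^{\perp})$ and then match cardinalities --- but they differ in how the inclusion is obtained. The paper first proves $\mathcal C\left(0,1,\dots,\lfloor\frac m2\rfloor\right)=\mathrm{PRM}^*(1,m-1,3)$ and invokes the classical fact (Theorem~\ref{thm:PG-PRM}) that the minimum-weight codewords of the code of the point--line design of $\mathrm{PG}(m-1,3)$ are exactly the scalar multiples of the characteristic vectors of lines; the containment $\mathcal C_i\subseteq\mathcal C\left(0,1,\dots,\lfloor\frac m2\rfloor\right)$ then transports these $\frac{(3^m-1)(3^{m-1}-1)}{8}$ codewords into $\mathcal C_i^{\perp}$, and equality of counts (Lemma~\ref{eq:dual-abc}) forces $W_4(\mathcal C_i^{\perp})$ to be precisely this set. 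You instead verify directly, via the Frobenius identity $(x_1\pm x_2)^{3^v}=x_1^{3^v}\pm x_2^{3^v}$ and the evenness of $3^v+1$, that the all-ones vector on each line is orthogonal to the generators, which bypasses the PRM machinery entirely for this conjecture; your closing count (minimum distance $4$ forcing exactly two codewords per support, so $|\mathcal B_4|=A_4/2$, compared with the Gaussian-binomial line count $\frac{(3^m-1)(3^{m-1}-1)}{16}$) is equivalent to the paper's comparison of $A_4$ values. What the paper's route buys is reusability: the PRM identification is needed anyway for the other conjectures, and it yields the weight-$4$ vectors themselves, not just their supports. Two small caveats on your write-up: the value $A_4(\mathcal C_i^{\perp})=\frac{(3^m-1)(3^{m-1}-1)}{8}$ is asserted rather than derived (it is correct, and is exactly what Lemma~\ref{eq:dual-abc} and Theorem~\ref{thm:C(0,1,...,m/2)}(iv) supply, but your proof is only complete once that Krawtchouk evaluation is actually carried out); and your "alternative, more conceptual" sign-pattern route is not a complete substitute as sketched, since it omits the case in which all four signs are equal, where one must still show that $\sum_\ell x_\ell^{3^v+1}=0$ for $v\in\{u,u+1\}$ forces the four points to be collinear.
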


\begin{conjecture}\label{conj:many designs}
Let $\mathcal C_i$ be the ternary code in (\ref{eq:C1}) or (\ref{eq:C2}). Let $k$ be an integer satisfying $A_k(\mathcal C_i^{\perp})> 1$.
Then $\left (\mathcal P(\mathcal C_i^{\perp}),  \mathcal B_{k}(\mathcal C_i^{\perp}) \right )$ is a $2$-design.
\end{conjecture}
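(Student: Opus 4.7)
The plan is to realize $\mathcal{C}_i$ inside a projective generalized Reed-Muller code on $PG(m-1,3)$ and then to transfer the $2$-transitivity of $PGL(m,3)$ to the level of \emph{supports} of codewords of $\mathcal{C}_i^\perp$, even though $PGL(m,3)$ does not preserve $\mathcal{C}_i^\perp$ as a code. The first step is to set up the projective interpretation. Since $m$ is odd, $-1$ is a non-square in $\mathrm{GF}(3^m)$, so $\{\alpha^i : 0 \le i < (3^m-1)/2\}$ is a system of representatives of $\mathrm{GF}(3^m)^{\times}/\{\pm 1\}\cong PG(m-1,3)$. Every exponent used in defining $\mathcal{C}_1$ and $\mathcal{C}_2$ is of the form $3^j+1$, which is even, so codeword values descend to well-defined functions on $PG(m-1,3)$. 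Each $\mathrm{Tr}_{3^m/3}(aX^{3^j+1})$ is moreover a $\mathrm{GF}(3)$-quadratic form on $\mathrm{GF}(3^m)$ viewed as an $m$-dimensional $\mathrm{GF}(3)$-space, whence $\mathcal{C}_i\subseteq\mathrm{PGRM}_3(2,m)$ and dually $\mathrm{PGRM}_3(2,m)^\perp\subseteq\mathcal{C}_i^\perp$.

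Next I would argue that for each admissible weight $k$, the multiset of supports $\mathcal{B}_k(\mathcal{C}_i^\perp)$ is $PGL(m,3)$-invariant as a family of $k$-subsets of $PG(m-1,3)$. The model case is $k=4$: Conjecture \ref{conj:Steiner} predicts that these supports are precisely the $4$-point projective lines of $PG(m-1,3)$, which form a $PGL$-invariant Steiner system $S(2,4,(3^m-1)/2)$. For larger $k$, one expects that weight-$k$ supports are describable from the line geometry of $PG(m-1,3)$ (for instance, as the supports of minimum-weight codewords of an ambient projective GRM code, or as specific unions of lines), so that the block collection $\mathcal{B}_k(\mathcal{C}_i^\perp)$ is itself $PGL$-invariant. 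Once this invariance is secured, the $2$-transitivity of $PGL(m,3)$ on points forces the number of weight-$k$ blocks through a fixed pair of points to be independent of the pair, yielding the $2$-design property.

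The main obstacle will be establishing the $PGL$-invariance of the weight-$k$ support families purely in terms of projective geometry, since the abstract warns that $\mathcal{C}_i^\perp$ does \emph{not} have $PGL(m,3)$ as a group of code automorphisms. One has to show that although the span $\mathcal{C}_i^\perp$ itself is not $PGL$-stable, the set of $k$-subsets realized as supports of its weight-$k$ codewords coincides with a $PGL$-invariant family coming from an ambient projective GRM code. This will presumably require first pinning down the minimum-weight codewords as the lines of $PG(m-1,3)$ (Conjecture \ref{conj:Steiner}) and then bootstrapping to higher weights through a careful structural analysis of supports in terms of line configurations --- this is where the ``theory of projective generalized Reed-Muller codes'' invoked in the abstract must carry the technical load.
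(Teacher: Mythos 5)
Your proposal is a program rather than a proof, and the step that carries all the weight is exactly the one you leave open. You want to show that the family $\mathcal B_{k}(\mathcal C_i^{\perp})$ of weight-$k$ supports is invariant under $\mathrm{PGL}(m,3)$ acting on $\mathrm{PG}(m-1,3)$, and then invoke $2$-transitivity. For $k=4$ this is plausible (and true: the supports are the lines, which is Conjecture \ref{conj:Steiner}), but for $k\ge 5$ you offer only the hope that weight-$k$ supports are ``describable from the line geometry,'' with no mechanism for proving it. Since $\mathrm{PGL}(m,3)$ does not preserve $\mathcal C_i^{\perp}$ as a code, there is no a priori reason its action should permute the weight-$k$ supports among themselves, and the paper's own Magma evidence (inequivalent shortened codes $\mathcal C_{\{t_1,t_2\}}$ for different pairs, non-$2$-homogeneous automorphism groups) is a warning that symmetry-based arguments are exactly what fail here. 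So as written the proposal has a genuine gap: the $\mathrm{PGL}$-invariance of $\mathcal B_k(\mathcal C_i^{\perp})$ for $k\ge 5$ is asserted, not proved.

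The paper's route is entirely different and is a counting argument. Because $\mathcal C$ is a three-weight code and $d(\mathcal C^{\perp})=4$, Theorem \ref{eq:wt-wtd} (via the first few Pless power moments) forces the weight distributions of the shortened codes $\mathcal C_{\{t\}}$ and $\mathcal C_{\{t_1,t_2\}}$ to be independent of the chosen positions (Theorems \ref{thm:sc1} and \ref{thm:sc2}); the MacWilliams identity and the duality $(\mathcal C_T)^{\perp}=(\mathcal C^{\perp})^{T}$ transfer this position-independence to $A_k\bigl((\mathcal C^{\perp})_{\{i\}}\bigr)$ and $A_k\bigl((\mathcal C^{\perp})_{\{i,j\}}\bigr)$; inclusion--exclusion then shows that the number of weight-$k$ codewords of $\mathcal C^{\perp}$ nonzero at two prescribed positions does not depend on the positions; and Lemma \ref{lem:(q-1)d} converts this codeword count into a block count, since for $k\le 2d(\mathcal C^{\perp})-1=7$ a support determines its codeword up to sign. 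That last inequality is precisely why the paper establishes Conjecture \ref{conj:many designs} only for $k\in\{4,5,6,7\}$ and leaves the general case open --- a limitation your proposal does not confront, since you aim at all $k$ with an argument whose key invariance claim is never established. If you could actually prove the $\mathrm{PGL}$-invariance of the higher-weight support families you would obtain something strictly stronger than the paper; but nothing in your sketch, nor in the projective Reed--Muller machinery you cite, supplies that proof.
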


The minimum distance of the dual code $\mathcal C_i^{\perp}$ is only $4$. Hence, the famous Assmus-Mattson theorem guarantees only $1$-designs supported by $\mathcal C_i$.
When $m=5$ and $7$, Magma experimental results showed that $\mathcal C_i$ is not  $2$-transitive or $2$-homogeneous.
Thus, in general, $\mathcal C_i$ is not $2$-transitive or $2$-homogeneous. Consequently,
the degree of transitivity or homogeneity of their automorphism groups cannot be employed
to proved that the two codes support $2$-designs.

The objective of this paper is to develop a different method to settle these conjectures.  Specifically, using  projective generalized Reed-Muller codes,
we prove that Conjectures \ref{conj:3 design} and \ref{conj:Steiner} hold, and Conjecture \ref{conj:many designs} is true for $k\in \{4,5,6,7\}$.

\section{Preliminaries}

In this section, we briefly recall some results  on linear codes, quadratic forms, projective geometries and projective generalized Reed-Muller codes.

\subsection{The Pless power moments}

Let $\mathcal C$ be an $[n,k,d]$ \emph{linear code} over $\mathrm{GF}(q)$. Denote by
$(A_0, A_1, \dots, A_n)$ and $(A_0^{\perp}, A_1^{\perp}, \dots, A_n^{\perp})$
 the weight distributions of $\mathcal C$ and its dual $\mathcal C^{\perp}$,
 respectively.
 The \emph{Pless power moments} \cite{HP10} are given by
 \begin{align}\label{eq:PPM}
 \sum_{i=0}^n i^t A_i= \sum_{i=0}^t (-1)^i A_i^{\perp}   \left [  \sum_{j=i}^t   j ! S(t,j) q^{k-j} (q-1)^{j-i} \binom{n-i}{n-j}  \right ],
 \end{align}
where $0\le t \le n$ and $S(t,j)=\frac{1}{j!} \sum_{i=0}^j  (-1)^{j-i} \binom{j}{i} i^t$.
These power moments can be employed to prove the following theorem \cite[Theorem 7.3.1]{HP10}.
\begin{theorem}\label{eq:wt-wtd}
Let $S\subseteq \{1,2,\dots, n\}$ with $\# S =s$. Then the weight distributions of $\mathcal C$ and $\mathcal C^{\perp}$
are uniquely determined by $A_1^{\perp}, \dots, A_{s-1}^{\perp}$ and the $A_i$ with $i\not \in S$.
These values can be found from the first $s$ equations in (\ref{eq:PPM}).
\end{theorem}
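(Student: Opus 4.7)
The plan is to regard the first $s$ Pless power-moment equations (those with $t=0,1,\dots,s-1$) as a linear system in the $s$ unknowns $\{A_i : i \in S\}$, after moving all known quantities to the right-hand side, and then to observe that the coefficient matrix of this system is Vandermonde.

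Write $S = \{i_1 < i_2 < \cdots < i_s\} \subseteq \{1,2,\dots,n\}$. The known data consist of $A_0 = 1$, the values $A_i$ for $i \in \{1,\dots,n\}\setminus S$, together with $A_0^{\perp} = 1$ and $A_1^{\perp},\dots,A_{s-1}^{\perp}$. For each $t$ with $0 \le t \le s-1$, the right-hand side of (\ref{eq:PPM}) involves only $A_0^{\perp}, A_1^{\perp},\dots,A_t^{\perp}$, all of which are known. Splitting the left-hand side into contributions from $i \in S$ and from $i \notin S$, the $t$-th moment rearranges to
$$\sum_{l=1}^{s} i_l^{\,t}\, A_{i_l} \;=\; R_t,$$
where $R_t$ is an expression built entirely from the given data (note that $0 \notin S$, so the only peculiarity at $t=0$ is that the term $A_0 = 1$ is absorbed into $R_0$).

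The coefficient matrix of this $s \times s$ system is $M = \bigl(i_l^{\,t}\bigr)_{0 \le t \le s-1,\, 1 \le l \le s}$, a Vandermonde matrix with determinant $\prod_{1 \le l < l' \le s}(i_{l'} - i_l)$. Since $i_1,\dots,i_s$ are distinct positive integers, this determinant is nonzero, so the system has a unique solution, which recovers $A_i$ for every $i \in S$ and hence the complete weight distribution of $\mathcal C$. The weight distribution of $\mathcal C^{\perp}$ is then uniquely determined from that of $\mathcal C$ by the MacWilliams identities (or, equivalently, by successively inverting the remaining Pless moment equations for $t = s, s+1,\dots, n$, each of which introduces exactly one new unknown $A_t^{\perp}$). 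The only real observation required is the Vandermonde structure of $M$; otherwise the argument is routine linear algebra, so no substantive obstacle is expected.
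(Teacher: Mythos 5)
Your proof is correct. The paper does not prove this statement itself---it only cites \cite[Theorem 7.3.1]{HP10}---and your argument (isolating the unknowns $A_{i_1},\dots,A_{i_s}$ in the moment equations for $t=0,\dots,s-1$ and noting that the coefficient matrix $\bigl(i_l^{\,t}\bigr)$ is an invertible Vandermonde matrix since the $i_l$ are distinct, then recovering the dual weight distribution via MacWilliams) is precisely the standard proof of that cited result, so there is nothing to add.
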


\subsection{Quadratic forms}

An $m$-ary \emph{quadratic form} over a field $\mathrm{GF}(q)$ is a homogeneous polynomial of degree $2$
in $m$ variables with coefficients in $\mathrm{GF}(q)$:
\begin{align*}
F(x_0, \dots, x_{m-1}) =\sum_{i,j=0}^{m-1} a_{i,j} x_i x_j, ~~~~~a_{i,j} \in \mathrm{GF}(q).
\end{align*}
Two quadratic forms $F$ and $F'$ over $\mathrm{GF}(q)$ are called equivalent if there is an $m$-by-$m$ nonsingular matrix $A$
such that $F'(x_0, \dots,x_{m-1})= F((x_0, \dots,x_{m-1})\cdot A)$.
If $q$ is odd, then every $m$-ary quadratic form $F(x_0, \dots, x_{m-1})$
over $\mathrm{GF}(q)$ is equivalent to the following diagonal form
$$ a_0 x_0^2+\cdots +a_{s-2} x_{s-2}^2 + a_{s-1}x_{s-1}^2,$$
where $a_i\in \mathrm{GF}(q)^{*}$ \cite[Theorem 6.21]{LN97}.
The number $s$ is called the \emph{rank} of  $F$.

For an $m$-ary quadratic form $F$ over $\mathrm{GF}(q)$, set
$$N(F=0)=\# \left \{ (x_0, \dots, x_{m-1})\in \mathrm{GF}(q)^m: F(x_0, \dots, x_{m-1}) =0 \right \}.$$
Suppose that $F$ is equivalent to the diagonal  form $a_0 x_0^2+\cdots +a_{s-2} x_{s-2}^2 + a_{s-1}x_{s-1}^2$ with  $a_i\in \mathrm{GF}(q)^{*}$.
Then, one has the following result on $N(F=0)$  \cite{LN97}:
\begin{align}\label{eq:N(F=0)}
N(F=0)=
\left\{
  \begin{array}{ll}
    q^{m-1}, & \text{ if } s \text{ is odd},\\
    q^{m-1}  + \eta(a_0 \dots a_{s-1}) \eta(-1)^{\frac{s}{2}} (q-1) q^{m-\frac{s+2}{2}}, & \text{ if } s \text{ is even,}
  \end{array}
\right.
\end{align}
 where $\eta$ denotes the quadratic character of $\mathrm{GF}(q)^{*}$.

\subsection{Designs from projective geometries}

Let $q$ be  a prime power and $m\ge 2$ be an integer.
We denote by $\mathrm{PG}(m-1,q)$ the \emph{projective geometry} of projective dimension $m-1$ over the field $\mathrm{GF}(q)$.
Its elements are the subspaces of $\mathrm{GF}(q)^{m}$ and its incidence relation is the set-theoretic inclusion.
A point of the projective geometry is given in homogeneous coordinates by $(x_0, x_1,\ldots, x_{m-1})$ where all $x_i$ are in $\mathrm{GF}(q)$ and are not all zero;
each point has $q-1$ coordinate representations, since $(ax_0,ax_1,...,ax_{m-1})$ and
$(x_0,x_1,...,x_{m-1})$
yield the same $1$-dimensional subspace of $\mathrm{GF}(q)^{m}$ for any non-zero $a\in \mathrm{GF}(q)$. The $1$-dimensional subspaces of $\mathrm{GF}(q)^{m}$ are the points of the projective geometry.
The \emph{projective dimension} of a subspace is defined to be $1$ less than the dimension of the subspace (as a vector space over $\mathrm{GF}(q)$).

For any integer $2 \le r\le m-1$, let $\mathcal P$ be the set of points of $\mathrm{PG}(m-1,q)$
and  $\mathcal B$  the collection  of  all the subspaces  of projective dimension $r-1$ of  $\mathrm{PG}(m-1,q)$.
Since the action of the projective general linear group on the points of $\mathrm{PG}(m-1,q)$  is  doubly-transitive,
$(\mathcal P, \mathcal B)$ is a $2$-design with parameters depending on $m$, $r$ and $q$, which will be denoted by $\mathbb{PD}(r-1,m-1,q)$.
The $2$-design $\mathbb{PD}(1,m-1,q)$ is  a Steiner system $S(2,q+1,\frac{q^m-1}{q-1})$.

\subsection{Projective generalized Reed-Muller codes}\label{subsec:RM}

For an  inter $r \ge 0$, let $\mathrm{PP}(r,m-1,q)$ be the linear subspace of $\mathrm{GF}(q)[x_0,x_1, \dots, x_{m-1}]$,
which is spanned by all
monomial $x_0^{i_0}x_1^{i_1}\cdots x_{m-1}^{i_{m-1}}$ satisfying the following two conditions
\begin{itemize}
\item $\sum_{j=0}^{m-1} i_j \equiv 0 \pmod{q-1}$,
\item    $0<\sum_{j=0}^{m-1} i_j   \le r(q-1)$.
\end{itemize}
Each $a \in \mathrm{GF}(q)$ is viewed as the constant function $f_a(x_0, x_1, \ldots, x_{m-1}) \equiv a$.

Let $\{\mathbf{x}^1, \dots, \mathbf{x}^{N}\}$
be the set of projective points in $\mathrm{PG}(m-1,q)$, where $N=\frac{q^m-1}{q-1}$. Then, the \emph{$r^{th}$ order
 projective generalized Reed-Muller code} $\mathrm{PRM}(r,m-1,q)$ of length $\frac{q^m-1}{q-1}$ is given as
\begin{align*}
\mathrm{PRM}(r,m-1,q)=\left \{\left (f(\mathbf{x}^1), \dots, f(\mathbf{x}^N) \right ): f\in \mathrm{PP}(r,m-1,q)\cup \mathrm{GF}(q) \right \}.
\end{align*}
When $r>1$, let  $\mathrm{PRM}^*(r,m-1,q)$ be the subcode of $\mathrm{PRM}(r,m-1,q)$
defined by
\begin{align*}
\mathrm{PRM}^*(r,m-1,q)=\left \{\left (f(\mathbf{x}^1), \dots, f(\mathbf{x}^N) \right ): f\in \mathrm{PP}(r,m-1,q) \right \}.
\end{align*}
Thus, $\mathrm{PRM}^*(r,m-1,q)$ is  the even-like subcode of $\mathrm{PRM}(r,m-1,q)$.
For the minimum weight and the dual of the projective generalized Reed-Muller code, we have the following \cite{AK98}.
\begin{theorem}
Let $0 \le r \le m-1$. Then, the minimal weight of $\mathrm{PRM}(r,m-1,q)$ is $\frac{q^{m-r}-1}{q-1}$ and
\begin{align*}
\mathrm{PRM}(r,m-1,q)^{\perp} = \mathrm{PRM}^*(m-1-r,m-1,q).
\end{align*}
\end{theorem}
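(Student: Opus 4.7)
The plan is to prove both assertions by embedding $\mathrm{PRM}(r, m-1, q)$ into the affine generalized Reed-Muller code of length $q^m$, namely the evaluation code on $\mathrm{GF}(q)^m$ of all polynomials in $m$ variables of total degree at most $r(q-1)$, whose classical minimum weight is $q^{m-r}$. Since every polynomial $f \in \mathrm{PP}(r, m-1, q) \cup \mathrm{GF}(q)$ has each monomial of total degree divisible by $q-1$, it is invariant under the scaling action of $\mathrm{GF}(q)^*$ on $\mathrm{GF}(q)^m \setminus \{0\}$; hence its length-$q^m$ affine evaluation vector repeats each projective value exactly $q-1$ times and carries one extra entry $f(0)$ at the origin.

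For the minimum distance upper bound, I pick $r$ linear forms $\ell_1, \dots, \ell_r$ whose common zero locus $V$ has projective dimension $m-1-r$, and consider
\begin{align*}
g(x) = \prod_{i=1}^{r}\bigl(1 - \ell_i(x)^{q-1}\bigr).
\end{align*}
Expanded, $g$ is a constant plus monomials of total degree a positive multiple of $q-1$ bounded by $r(q-1)$, and $g$ evaluates to the indicator function of $V$, producing a codeword of weight $|V| = \frac{q^{m-r}-1}{q-1}$. For the matching lower bound, write any non-zero codeword as the evaluation of $f = h + c$ with $h \in \mathrm{PP}(r, m-1, q)$ and $c \in \mathrm{GF}(q)$, and let $w_P(f)$ denote its projective weight. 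If $c = 0$, then $f(0) = 0$, the affine weight equals $(q-1)w_P(f) \ge q^{m-r}$, forcing $w_P(f) \ge \lceil q^{m-r}/(q-1)\rceil = \frac{q^{m-r}-1}{q-1} + 1$; if $c \ne 0$, then $f(0) = c$ contributes one extra non-zero coordinate in the affine picture, giving $(q-1)w_P(f) + 1 \ge q^{m-r}$ and the sharp bound $w_P(f) \ge \frac{q^{m-r}-1}{q-1}$.

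For the dual identity I first verify orthogonality on monomial generators. Given $\mathbf{i}, \mathbf{j} \in \mathbb{Z}_{\ge 0}^m$ with $0 \le |\mathbf{i}| \le r(q-1)$ and $0 < |\mathbf{j}| \le (m-1-r)(q-1)$, both sums divisible by $q-1$, the inner product of the projective evaluation vectors of $x^{\mathbf{i}}$ and $x^{\mathbf{j}}$ reduces, by scaling invariance and the vanishing of the $v = 0$ term, to a scalar multiple of $\prod_{\ell=0}^{m-1} \sum_{v_\ell \in \mathrm{GF}(q)} v_\ell^{(\mathbf{i}+\mathbf{j})_\ell}$. A non-zero product would require every coordinate of $\mathbf{i}+\mathbf{j}$ to be a positive multiple of $q-1$, so $|\mathbf{i}+\mathbf{j}| \ge m(q-1)$, contradicting $|\mathbf{i}+\mathbf{j}| \le (m-1)(q-1)$. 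This yields the inclusion $\mathrm{PRM}^*(m-1-r, m-1, q) \subseteq \mathrm{PRM}(r, m-1, q)^\perp$.

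To upgrade this inclusion to equality I match dimensions. Expressing codewords by reduced polynomials with per-variable degree at most $q-1$, the involution $\mathbf{k} \mapsto (q-1, \dots, q-1) - \mathbf{k}$ on $\{0, \dots, q-1\}^m$ preserves $(q-1) \mid |\mathbf{k}|$ and bijects the exponents with $|\mathbf{k}| > r(q-1)$ onto those with $|\mathbf{k}| \le (m-1-r)(q-1)$. Together with a short character-sum count identifying the total number of reduced scaling-invariant monomials as $\frac{q^m-1}{q-1} + 1$, this delivers $\dim \mathrm{PRM}(r, m-1, q) + \dim \mathrm{PRM}^*(m-1-r, m-1, q) = \frac{q^m-1}{q-1}$, closing the argument. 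The subtlest step I expect is the minimum-distance lower bound: without separating the constant-term contribution one only obtains the weaker estimate $w_P(f) \ge q^{m-r}/(q-1)$, and it is precisely the $(q-1)$-to-$1$ slack from scaling invariance that forces the minimum-weight codewords to be shifts of $\mathrm{PP}$-codewords by the all-ones vector.
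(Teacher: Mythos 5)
The paper does not prove this theorem at all: it is quoted from Assmus and Key \cite{AK98}, so there is no internal proof to compare against. Your argument is essentially the standard one from that source. You lift everything through the $(q-1)$-to-one covering $\mathrm{GF}(q)^m\setminus\{0\}\to\mathrm{PG}(m-1,q)$, which is legitimate because every generator has degree divisible by $q-1$; you import the minimum weight $q^{m-r}$ of the affine generalized Reed--Muller code of order $r(q-1)$ (itself a nontrivial classical theorem that should be cited, but treating it as known is reasonable); and you check duality on monomial generators via the character sum $\sum_{v\in\mathrm{GF}(q)}v^{k}$, which vanishes unless $k$ is a positive multiple of $q-1$, so that a nonzero inner product would force $|\mathbf{i}+\mathbf{j}|\ge m(q-1)$ against the bound $(m-1)(q-1)$. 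The case split on the constant term in the lower bound is exactly what makes the projective minimum weight $\frac{q^{m-r}-1}{q-1}$ rather than $\lceil q^{m-r}/(q-1)\rceil$, and you handle it correctly; the weight-$\frac{q^{m-r}-1}{q-1}$ codeword $\prod_{i}\bigl(1-\ell_i(x)^{q-1}\bigr)$ is also right.

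The one step that is asserted but not actually proved is the final dimension count. Your involution together with the count $\frac{q^m-1}{q-1}+1$ shows only that the number of reduced scaling-invariant monomials spanning $\mathrm{PRM}(r,m-1,q)$ plus the number spanning $\mathrm{PRM}^*(m-1-r,m-1,q)$ equals $\frac{q^m-1}{q-1}$. To conclude that the \emph{dimensions} sum to this value --- which is what upgrades the inclusion to equality --- you must also show that the corresponding projective evaluation vectors are linearly independent, and this is not automatic: the full set of $\frac{q^m-1}{q-1}+1$ reduced scaling-invariant monomials evaluates into a space of dimension only $\frac{q^m-1}{q-1}$, so the evaluation map has a one-dimensional kernel and without further argument you only get an upper bound on the sum of dimensions, which proves nothing beyond the inclusion you already have. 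The fix is short: a reduced polynomial vanishing on all of $\mathrm{GF}(q)^m\setminus\{0\}$ is a scalar multiple of $\prod_{\ell}\bigl(1-x_\ell^{q-1}\bigr)$, whose reduced expansion contains the monomial $\prod_{\ell}x_\ell^{q-1}$ of degree $m(q-1)$; since $r\le m-1$ and $m-1-r\le m-1$, no monomial of that degree occurs in either spanning set, so the kernel meets neither span and both evaluation maps are injective. With that supplied, the proof is complete.
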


The following theorem gives the relationship between the codes
$\mathsf{C}_{p}\left (\mathbb{PD}(r-1,m-1,q) \right )$ of designs
arising from projective geometries and the  projective generalized Reed-Muller codes \cite{AK98}.

\begin{theorem}\label{thm:PG-PRM}
Let $m$ be a positive integer, $p$ a prime, and $0 \le r \le m$.

\rm(i) The code $\mathsf{C}_{p}\left (\mathbb{PD}(r-1,m-1,p) \right )$
from the design of points and projective
$(r-1)$-dimensional subspaces of the projective geometry $\mathrm{PG}(m-1,p)$ is the same as $\mathrm{PRM}(m-r,m-1,p)$
 up to a permutation of coordinates.

 \rm(ii) $\mathsf{C}_{p}\left (\mathbb{PD}(r-1,m-1,p) \right )$ has minimum weight
 $\frac{p^r-1}{p-1}$ and the minimum-weight vectors are the multiples of the characteristic  vectors of the blocks.

 \rm(iii) The dual code $\mathsf{C}_{p}\left (\mathbb{PD}(r-1,m-1,p) \right )^{\perp}$
 of $\mathsf{C}_{p}\left (\mathbb{PD}(r-1,m-1,p) \right )$ is the same as $\mathrm{PRM}^*(r-1,m-1,p)$
 up to a permutation of coordinates and has minimum weight at least  $\frac{p^{m-r+1}-1}{p-1}+1$.

\rm(iv) The dimension of the code $\mathsf{C}_{p}\left (\mathbb{PD}(r-1,m-1,p) \right )$ is
$$ \frac{p^m-1}{p-1}-\sum_{i=0}^{r-2} (-1)^{i} \binom{(r-1-i)(p-1)-1}{i} \binom{m-r+(r-1-i)p}{m-1-i}.$$
\end{theorem}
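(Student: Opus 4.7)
The plan is to identify $\mathsf{C}_{p}\!\left(\mathbb{PD}(r-1,m-1,p)\right)$ with the projective generalized Reed-Muller code $\mathrm{PRM}(m-r,m-1,p)$ via explicit polynomial representatives of the block characteristic vectors, and then to derive parts (ii)--(iv) from this identification together with the duality and minimum-weight results stated in the preceding theorem.

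For (i), I would realize each $(r-1)$-dimensional projective subspace $V\subseteq\mathrm{PG}(m-1,p)$ as the common zero locus of $m-r$ linearly independent linear forms $L_1,\ldots,L_{m-r}$ on $\mathrm{GF}(p)^m$. Fermat's little theorem yields $L_i(x)^{p-1}\in\{0,1\}$ for $x\neq 0$, and each $L_i^{p-1}$ is invariant under nonzero scaling, so
$$ g_V(x)\;=\;\prod_{i=1}^{m-r}\bigl(1-L_i(x)^{p-1}\bigr) $$
is a well-defined function on projective points whose evaluation vector equals the characteristic vector of $V$. When expanded, each nonconstant monomial of $g_V$ has total degree $e(p-1)$ with $1\le e\le m-r$, so $g_V\in \mathrm{PP}(m-r,m-1,p)\cup\mathrm{GF}(p)$, yielding the inclusion $\mathsf{C}_{p}\!\left(\mathbb{PD}(r-1,m-1,p)\right)\subseteq\mathrm{PRM}(m-r,m-1,p)$. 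For the reverse inclusion I would show directly that every basis monomial of $\mathrm{PP}(m-r,m-1,p)$ is a $\mathrm{GF}(p)$-linear combination of the $g_V$'s, using that the $(p-1)$-st powers $L^{p-1}$, as $L$ ranges over all nonzero linear forms on $\mathrm{GF}(p)^m$, span the space of homogeneous polynomials of degree $p-1$, and then taking products of such spans to reach the admissible degree levels up to $(m-r)(p-1)$.

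Parts (ii) and (iii) fall out of (i) and the preceding theorem. By (i), $\mathsf{C}_{p}\!\left(\mathbb{PD}(r-1,m-1,p)\right)=\mathrm{PRM}(m-r,m-1,p)$ has minimum weight $\tfrac{p^{r}-1}{p-1}$, which is exactly the size of a block, so equality holds and every characteristic vector of a block is a minimum-weight codeword; the classical characterization that minimum-weight codewords of $\mathrm{PRM}$ must be scalar multiples of products of $m-r$ linear-form-powers, and hence correspond geometrically to $(r-1)$-subspaces, completes (ii). Part (iii) is then immediate from the duality relation $\mathrm{PRM}(m-r,m-1,p)^{\perp}=\mathrm{PRM}^{*}(r-1,m-1,p)$ of the preceding theorem. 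For its minimum-weight bound, the minimum-weight codewords of $\mathrm{PRM}(r-1,m-1,p)$ are, by applying (i) and (ii) with $r$ replaced by $m-r+1$, scalar multiples of characteristic vectors of $(m-r)$-subspaces; these have coordinate sum $\tfrac{p^{m-r+1}-1}{p-1}=1+p+\cdots+p^{m-r}\equiv 1\pmod{p}$, so they lie outside the even-like subcode $\mathrm{PRM}^{*}$, forcing the minimum weight of $\mathrm{PRM}^{*}(r-1,m-1,p)$ to strictly exceed $\tfrac{p^{m-r+1}-1}{p-1}$.

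The main obstacle is (iv), the closed-form dimension formula. The task reduces to counting the $\mathrm{GF}(p)$-dimension of the evaluation map on $\mathrm{PP}(m-r,m-1,p)\cup\mathrm{GF}(p)$; on projective points one may reduce each exponent modulo $p$, so the relevant basis is indexed by tuples $(i_0,\ldots,i_{m-1})$ with $0\le i_j\le p-1$ whose digit sums lie in $\{0,p-1,2(p-1),\ldots,(m-r)(p-1)\}$. The number of such tuples with prescribed sum $t$ is $\sum_{j\ge 0}(-1)^{j}\binom{m}{j}\binom{t-jp+m-1}{m-1}$ by inclusion-exclusion on the constraints $i_j\le p-1$, and summing over the admissible $t$ and re-indexing by $i=\lfloor t/p\rfloor$ regroups the double sum into the single alternating expression displayed in the statement. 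Carrying out this binomial-identity manipulation carefully, and separating out the constant term from the contribution of $\mathrm{GF}(p)$, is the bulk of the technical work and the step I would expect to consume the most effort.
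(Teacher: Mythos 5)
The paper does not actually prove this theorem: it is quoted verbatim from Assmus--Key \cite{AK98} as background, so there is no internal proof to compare against. Judged on its own merits, your outline is the standard one --- representing a block by $g_V=\prod_{i=1}^{m-r}\bigl(1-L_i^{p-1}\bigr)$, deducing the forward inclusion from the degree structure of $g_V$, and reading off (ii) and (iii) from the identification together with the minimum-weight theorem and the congruence $\frac{p^{m-r+1}-1}{p-1}\equiv 1 \pmod p$ (that last computation is correct and is exactly how the ``$+1$'' in (iii) arises). Parts (ii) and (iv) do lean on heavy imported facts (the Delsarte--Goethals--MacWilliams-type characterization of minimum-weight codewords of generalized Reed--Muller codes, and a delicate binomial re-indexing to reach Hamada's formula), but you flag these honestly.

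The genuine gap is the reverse inclusion in (i). Your argument --- that the powers $L^{p-1}$ span the homogeneous polynomials of degree $p-1$ and that one can then ``take products of such spans'' --- conflates the multiplicative structure of the polynomial ring with the additive structure of the code. The code $\mathsf{C}_{p}\bigl(\mathbb{PD}(r-1,m-1,p)\bigr)$ is only the $\mathrm{GF}(p)$-\emph{linear} span of the specific vectors $g_V$, where the $L_i$ are moreover constrained to be exactly $m-r$ linearly independent forms; products of codewords are not codewords, so knowing that products of arbitrary $(p-1)$-st powers span each graded piece $H_{e(p-1)}$ does not place those products, let alone arbitrary monomials of $\mathrm{PP}(m-r,m-1,p)$, inside the span of the $g_V$'s. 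This reverse inclusion is the actual content of the theorem (it is equivalent to Hamada's $p$-rank formula for the design matching the PRM dimension), and closing it requires either an explicit inductive scheme expressing every admissible monomial as a linear combination of block characteristic vectors, or an independent computation of the $p$-rank of the incidence matrix so that equality follows from one inclusion plus equal dimensions. As written, neither is supplied, so (i) --- and hence everything downstream of it --- is not yet established.
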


\section{A class of ternary cyclic codes of length $\frac{3^m-1}{2}$}

In this section, we   introduce a  class of ternary cyclic codes of length $\frac{3^m-1}{2}$, and determine the parameters of these codes and their dual codes.

Let $m$ be a positive integer, $\alpha$ be a generator of $\mathrm{GF}(3^m)^*$ and $\beta=\alpha^{2}$. Then $\beta$ is a primitive $\frac{3^m-1}{2}$-th root of unity
in $\mathrm{GF}(3^m)$. Let $l$ be a positive integer and  $E=\{k_i : 0 \le i \le l \}$  a set of integers with
$0\le k_0<k_1<\cdots<k_l\le \frac{m}{2}$.
A \emph{ternary cyclic code $\mathcal C(E)$} of  length $n=\frac{3^m-1}{2}$ is defined by
\begin{equation}\label{eq:C-E}
\mathcal{C}(E)=\{\mathbf{c}(\underline{a}):
\underline{a}=(a_0, \dots, a_l) \in \mathrm{GF}(3^m)^{l+1}\},
\end{equation}
where $\mathbf{c}(\underline{a})=\left( \mathrm{Tr}_{3^m/3}\left (\sum_{j=0}^l a_j \alpha^{\left (3^{k_j}+1 \right )i} \right ) \right )_{i=0}^{n-1}$.
We also write $\mathcal C(k_0,k_1,\dots, k_l)$ for  $\mathcal C(E)$.
From Delsarte's theorem \cite{MS77}, the dual code $\mathcal C(E)^{\perp}$ of $\mathcal C(E)$ can be given by
\begin{align*}
\mathcal C(E)^{\perp}=\left \{(w_0, \dots, w_{n-1}) \in \mathrm{GF}(3)^{n}: \sum_{i=0}^{n-1} w_i \mathbf u_i=\mathrm{0}\right \},
\end{align*}
where $\mathbf u_i=\left( \alpha^{i\left (3^{k_j}+1 \right )} \right )_{j=0}^{l} \in \mathrm{GF}(3^m)^{l+1}$.
Note that for the linear codes  defined in (\ref{eq:C1}) and (\ref{eq:C2}), $\mathcal C_1=\mathcal C(0,1)$ and $\mathcal C_2=\mathcal{C} (\frac{m-3}{2},\frac{m-1}{2})$.

To determine the parameters of $\mathcal C(E)$ and its dual, we need some results on irreducible polynomials.
For an  integer $e$, let $\mathbb M_{\alpha^{-e}}(x)\in \mathrm{GF}(3)[x]$ be the minimal polynomial of $\alpha^{-e}$ over $\mathrm{GF}(3)$. We have the following lemma on
$\mathbb M_{\alpha^{-e}}(x)$.
\begin{lemma}\label{lem:irre-poly}
Let $k$ and $k'$ be two integers such that $0\le k, k' \le \frac{m}{2}$. Then

\rm{(i)} $\mathrm{deg}\left (\mathbb M_{\alpha^{-\left (3^{k}+1 \right)}}(x)\right )=
\left\{
  \begin{array}{ll}
    m, & \text{if}~k<\frac{m}{2}     ,  \\
    \frac{m}{2}, &  \text{if}~\text{ m is even and } k=\frac{m}{2}.
  \end{array}
\right.
$

\rm{(ii)} $\mathbb M_{\alpha^{-\left (3^{k}+1 \right)}}(x)=\mathbb M_{\alpha^{-\left (3^{k'}+1 \right)}}(x)$
if and only if $k=k'$.
\end{lemma}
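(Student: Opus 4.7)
The plan is to translate the lemma into a statement about $3$-cyclotomic cosets modulo $n := 3^m - 1$. The degree of $\mathbb M_{\alpha^{-e}}(x)$ over $\mathrm{GF}(3)$ equals the cardinality of the $3$-cyclotomic coset of $-e$ modulo $n$, which is the smallest positive integer $s$ with $3^s e \equiv e \pmod{n}$; and two such minimal polynomials coincide iff $e$ and $e'$ lie in the same coset. Both parts of the lemma therefore reduce to the single congruence
\begin{equation*}
3^s(3^k + 1) \;\equiv\; 3^{k'} + 1 \pmod{3^m - 1},
\end{equation*}
which I would analyze for $s \in \{0, 1, \ldots, m-1\}$ and (after relabeling if necessary) for $0 \le k \le k' \le m/2$.

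To carry out the analysis, I would reduce the exponent $s + k$ modulo $m$: setting $r = (s+k) \bmod m$, the left-hand side equals $3^r + 3^s$ as an integer in $[0, 3^m)$, while the right-hand side equals $3^{k'} + 1$ as an integer in the same range. The uniqueness of base-$3$ expansions then forces a matching of the multisets of digit positions. A short case analysis on whether $s + k < m$ (no wrap, so $r = s+k$) or $s + k \ge m$ (a single wrap, since $s + k < 2m$, giving $r = s+k-m$) yields only two kinds of solutions: either $s = 0$ with $k = k'$, or $s + k = m$ with $s = k'$. Under the constraint $k, k' \le m/2$, the second case collapses to $k = k' = m/2$. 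This simultaneously proves (ii): the two minimal polynomials coincide only when $k = k'$; and (i): for $k < m/2$, no $s \in (0, m)$ solves the congruence, so the coset closes only at $s = m$ and the degree is $m$, while for $k = m/2$ (with $m$ even) the value $s = m/2$ also closes the coset, giving degree $m/2$.

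The main bookkeeping obstacle is the pair of boundary cases. First, $k = 0$: here $3^s(3^k + 1) = 2 \cdot 3^s$ has a single base-$3$ digit equal to $2$ rather than two digit-$1$'s, so the digit-matching step needs a small variant, and one must separately verify that this already forces $k' = 0$ (together with $s = 0$ for the smallest solution). Second, $k = m/2$: here the wrap $s + k \equiv 0 \pmod m$ genuinely produces a second coset-closure value $s = m/2$, which is exactly where the degree drops to $m/2$. Both are routine once isolated, but they are the only subtleties in an otherwise mechanical case analysis.
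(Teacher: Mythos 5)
Your proposal is correct and takes essentially the same approach as the paper: both reduce the lemma to the congruence $3^{d}(3^{k}+1)\equiv 3^{k'}+1 \pmod{3^m-1}$ and resolve it by matching powers of $3$ after splitting into the cases $d+k<m$ and $d+k\ge m$. Your unification of (i) and (ii) into one congruence and your explicit handling of the $k=0$ digit-$2$ boundary case are minor presentational refinements of the same argument.
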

\begin{proof}
\rm(i) Note that  $\mathrm{deg}\left (\mathbb M_{\alpha^{-\left (3^{k}+1 \right)}}(x)\right )$ is the least positive integer $d\le m$ such that
\begin{align}\label{eq:cyc-3^m-1}
(3^k+1)3^d\equiv (3^k+1) \pmod{3^m-1}.
\end{align}
It suffices to prove that Equation (\ref{eq:cyc-3^m-1}) holds  if and only if $d=k=\frac{m}{2}$ or $d=m$.
If $d=k=\frac{m}{2}$ or $d=m$, we can easily verify   that Equation (\ref{eq:cyc-3^m-1}) holds.
Conversely, suppose that $(3^k+1)3^d\equiv (3^k+1) \pmod{3^m-1}$ and $d<m$.
If $k+d<m$, then  $3^{k+d}+3^d=3^k+1$ and $d=0$,    which leads to a contradiction with $d>0$.
If $k+d \ge m$, then  $(3^k+1)3^d\equiv 3^{k+d-m}+3^d \pmod{3^m-1}$.
From $0\le k+d-m \le \frac{m}{2}$, we have
$3^{k+d-m}+3^d=3^k+1$, i.e.,  $d=k$, $k+d-m=0$,
and $d=k=\frac{m}{2}$. Part \rm (i) follows.

\rm(ii) Suppose $\mathbb M_{\alpha^{-\left (3^{k}+1 \right)}}(x)=\mathbb M_{\alpha^{-\left (3^{k'}+1 \right)}}(x)$.
This holds if and only if there exists a positive integer $d\le m$ such that $(3^k+1)3^d\equiv 3^{k'}+1 \pmod{3^m-1}$.
If $d=m$, then $k=k'$. If $d<m$ and $k+d<m$,
then we have $3^{k+d}+3^d=3^{k'}+1$, which  contradicts the fact that $d\ge 1$.
If $d<m$ and $k+d\ge m$, then we get $3^{k+d-m}+3^d \equiv 3^{k'}+1 \pmod{3^m-1}$.
Hence $k'=d$ and $k=m-d$. From $k,k'\le \frac{m}{2}$, one obtains $k'=k=d=\frac{m}{2}$.
Part \rm(ii) follows.
\end{proof}

As a consequence of Lemma \ref{lem:irre-poly}, we have the following proposition on
some parameters of $\mathcal{C}(E)$.

\begin{proposition}\label{prop:cyc-dim}
The linear code $\mathcal{C}(E)$ defined by (\ref{eq:C-E}) is a ternary cyclic
code of length $\frac{3^m-1}{2}$ and dimension
$$
\mathrm{dim} (\mathcal{C}(E))=
\left\{
  \begin{array}{ll}
    (l+1)m, & \text{if} ~k_l<\frac{m}{2}, \\
    \frac{2l+1}{2}m, &\text{if} ~ m \text{ is even and } k_l=\frac{m}{2}.
  \end{array}
\right.
$$
In particular,  the linear code $\mathcal{C}\left (0,1,2, \dots, \left \lfloor \frac{m}{2} \right \rfloor     \right )$
is a cyclic code with dimension $\frac{m(m+1)}{2}$, where  $\left \lfloor \frac{m}{2} \right \rfloor$ is
the greatest integer less than or equal to $\frac{m}{2}$.
\end{proposition}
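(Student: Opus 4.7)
The plan is to interpret $\mathcal{C}(E)$ as a cyclic code of length $n=\frac{3^m-1}{2}$ via its check polynomial and then to read off the dimension by applying Lemma \ref{lem:irre-poly}.

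First, I would verify cyclicity of $\mathcal{C}(E)$ and identify the right check polynomial. Since $3^{k_j}+1$ is even for every $k_j\ge 0$, setting $s_j=(3^{k_j}+1)/2$ and $\beta=\alpha^2$ (a primitive $n$-th root of unity in $\mathrm{GF}(3^m)$), one has $\alpha^{(3^{k_j}+1)i}=\beta^{s_j i}$. The codewords of $\mathcal{C}(E)$ therefore take the standard trace form
\begin{equation*}
\bigl(\mathrm{Tr}_{3^m/3}\bigl(\textstyle\sum_{j=0}^{l} a_j \beta^{s_j i}\bigr)\bigr)_{i=0}^{n-1},
\end{equation*}
which exhibits $\mathcal{C}(E)$ as a cyclic code of length $n$ over $\mathrm{GF}(3)$; by Delsarte's theorem its dimension equals the sum of the degrees of the distinct minimal polynomials $\mathbb M_{\beta^{s_j}}(x)$ over $\mathrm{GF}(3)$.

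Next, since $\beta^{s_j}=\alpha^{3^{k_j}+1}$, the minimal polynomial $\mathbb M_{\beta^{s_j}}(x)$ has the same degree as its reciprocal $\mathbb M_{\alpha^{-(3^{k_j}+1)}}(x)$. Lemma \ref{lem:irre-poly}(ii) then guarantees that the polynomials involved are pairwise distinct as $k_j$ ranges over $E$, while Lemma \ref{lem:irre-poly}(i) gives each individual degree. Summing yields the claimed dimension: if $k_l<\frac{m}{2}$, all $l+1$ degrees equal $m$ and $\dim\mathcal{C}(E)=(l+1)m$; if $m$ is even and $k_l=\frac{m}{2}$, then $l$ of the degrees equal $m$ and the last equals $\frac{m}{2}$, giving $lm+\frac{m}{2}=\frac{(2l+1)m}{2}$.

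Finally, for the specific code $\mathcal{C}\bigl(0,1,\ldots,\lfloor m/2\rfloor\bigr)$ I would check both parities: when $m$ is odd, $l+1=\frac{m+1}{2}$ with $k_l=\frac{m-1}{2}<\frac{m}{2}$, producing $(l+1)m=\frac{m(m+1)}{2}$; when $m$ is even, $l+1=\frac{m}{2}+1$ with $k_l=\frac{m}{2}$, producing $\frac{(2l+1)m}{2}=\frac{m(m+1)}{2}$. Both cases coincide. The only real bookkeeping hurdle is aligning the two lengths $3^m-1$ and $\frac{3^m-1}{2}$—namely, confirming that the cyclotomic-coset count implicit in Lemma \ref{lem:irre-poly}, phrased modulo $3^m-1$, does compute the degree of the minimal polynomial of $\beta^{s_j}$ over $\mathrm{GF}(3)$. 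The halving identification $\alpha^{(3^{k_j}+1)i}=\beta^{s_j i}$ makes this explicit, after which the entire proposition reduces to direct invocation of Lemma \ref{lem:irre-poly}.
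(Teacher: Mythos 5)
Your proposal is correct and follows essentially the same route as the paper, which states the proposition as a direct consequence of Lemma \ref{lem:irre-poly} without further detail: you supply exactly the intended bookkeeping (trace representation over $\beta=\alpha^2$, Delsarte/check-polynomial dimension count, degrees from part (i), distinctness from part (ii), and the parity check for $\mathcal{C}(0,1,\dots,\lfloor m/2\rfloor)$). No gaps.
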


In the following, we will  determine the parameters of the codes
$\mathcal{C}\left (0,1,2, \dots, \left \lfloor \frac{m}{2} \right \rfloor     \right )$ and
$\mathcal{C}\left (0,1, \dots, \lfloor \frac{m}{2}\rfloor \right  )^{\perp}$.

Let $\{\alpha_0, \dots, \alpha_{m-1}\}$ be a basis of $\mathrm{GF}(3^m)$ over $\mathrm{GF}(3)$.
Let $\rho$ be the linear transformation
from $\mathrm{GF}(3^m)$ to $\mathrm{GF}(3)^m$ defined by
\begin{align*}
\rho (x) = (x_0,x_1, \dots, x_{m-1})\in \mathrm{GF}(3)^{m},
\end{align*}
 where $x=\sum_{i=0}^{m-1} x_i \alpha_i
\in \mathrm{GF}(3^{m})$.
From this isomorphism $\rho$, a function $f: \mathrm{GF}(3^m) \rightarrow \mathrm{GF}(3)$
induces a function $F: \mathrm{GF}(3)^m \rightarrow \mathrm{GF}(3)$. In particular, the
function
$$f_{a_0, \dots, a_{\lfloor \frac{m}{2} \rfloor}}(x)=\mathrm{Tr}_{3^m/3} \left( \sum_{i=0}^{\lfloor \frac{m}{2} \rfloor  } a_i x^{3^i+1} \right )$$
with $a_i \in \mathrm{GF}(3^m)$ induces a quadratic form
\begin{align}\label{eq:f-F}
&F_{a_0, \dots, a_{\lfloor \frac{m}{2} \rfloor}}(x_0, \dots, x_{m-1})\nonumber \\
&=  \mathrm{Tr}_{3^m/3} \left ( \sum_{t=0}^{\lfloor \frac{m}{2} \rfloor} a_t \left (\sum_{i=0}^{m-1} x_i \alpha_i \right )^{3^t+1} \right )\nonumber\\
&= \mathrm{Tr}_{3^m/3} \left ( \sum_{t=0}^{\lfloor \frac{m}{2} \rfloor} a_t \left (\sum_{i=0}^{m-1} x_i \alpha_i^{3^t} \right ) \left (\sum_{j=0}^{m-1} x_j \alpha_j \right ) \right )\nonumber
\\
&= \sum_{i,j=0}^{m-1}  \mathrm{Tr}_{3^m/3} \left ( \sum_{t=0}^{\lfloor \frac{m}{2} \rfloor}  a_t \alpha_i^{3^t} \alpha_j \right )    x_ix_j \in \mathrm{PP}(1,m-1,3),
\end{align}
where $\mathrm{PP}(1,m-1,3)$ was defined in Subsection \ref{subsec:RM}. From the definition of $F_{a_0, \dots, a_{\lfloor \frac{m}{2} \rfloor}}$,
we have $F_{a_0, \dots, a_{\lfloor \frac{m}{2} \rfloor}}(\rho(x))= f_{a_0, \dots, a_{\lfloor \frac{m}{2} \rfloor}}(x)$ for any $ x \in \mathrm{GF}(3^m)$.

Note that the set of all projective points of $\mathrm{PG}(m-1,3)$ is $\left \{\rho \left (\alpha^{i} \right ): i=0, \dots, \frac{3^m-1}{2}-1 \right \}$, where
$\alpha$ is a generator of $\mathrm{GF}(3^m)^*$.
Hence, we can choose $\mathbf x^i= \rho \left (\alpha^{i} \right )$ in the definition of $\mathrm{PRM}(1,m-1,3)$ and $\mathrm{PRM}^*(1,m-1,3)$.
A map $\pi$ from $\mathcal{C}\left (0,1, \dots, \lfloor \frac{m}{2}\rfloor \right  )$ to
$\mathrm{PRM}^*(1,m-1,3)$ can be defined by
\begin{align*}
\pi: \mathcal{C}\left (0,1, \dots, \lfloor \frac{m}{2}\rfloor \right  ) & \longrightarrow \mathrm{PRM}^*(1,m-1,3)\\
\left (f_{a_0, \dots, a_{\lfloor \frac{m}{2} \rfloor}}\left (\alpha^i \right ) \right )_{i=0}^{\frac{3^m-1}{2}-1} & \longmapsto   \left (F_{a_0, \dots, a_{\lfloor \frac{m}{2} \rfloor}}\left (\rho(\alpha^i) \right ) \right )_{i=0}^{\frac{3^m-1}{2}-1},
\end{align*}
where $f_{a_0, \dots, a_{\lfloor \frac{m}{2} \rfloor}}(x)=\mathrm{Tr}_{3^m/3} \left( \sum_{i=0}^{\lfloor \frac{m}{2} \rfloor  } a_i x^{3^i+1} \right )$
with $a_i \in \mathrm{GF}(3^m)$  and  $F_{a_0, \dots, a_{\lfloor \frac{m}{2} \rfloor}}$ was  defined in (\ref{eq:f-F}).
Since $F_{a_0, \dots, a_{\lfloor \frac{m}{2} \rfloor}}(\rho(\alpha^i))= f_{a_0, \dots, a_{\lfloor \frac{m}{2} \rfloor}}(\alpha^i)$ and  $\pi$ is
an inclusion map, we have
$$
\mathcal{C}\left (0,1, \dots, \lfloor \frac{m}{2}\rfloor \right  )  \subseteq \mathrm{PRM}^*(1,m-1,3).
$$
On the other hand, from Theorem \ref{thm:PG-PRM} and Proposition \ref{prop:cyc-dim}, we have the demension
$$
\mathrm{dim}\left (\mathcal{C}\left (0,1, \dots, \lfloor \frac{m}{2}\rfloor \right  ) \right )=
\mathrm{dim}\left (\mathrm{PRM}^*(1,m-1,3) \right )=\frac{m(m+1)}{2}.$$
Using Theorem \ref{thm:PG-PRM} again, we have
\begin{align*}
\mathcal{C}\left (0,1, \dots, \lfloor \frac{m}{2}\rfloor \right  )  =   \mathsf{C}_{3}\left (\mathbb{PD}(1,m-1,3) \right )^{\perp}=\mathrm{PRM}^*(1,m-1,3).
\end{align*}
Note that $\mathbb{PD}(1,m-1,3)$ is a Steiner system $S(2,4, \frac{3^m-1}{2})$ with $\frac{(3^m-1)(3^{m-1}-1)}{16}$ blocks.
From the previous discussion and Theorem \ref{thm:PG-PRM}, we have the following
theorem.
\begin{theorem}\label{thm:C(0,1,...,m/2)}
Let $m\ge 3$ be an integer.

\rm(i) $\mathcal{C}\left (0,1, \dots, \lfloor \frac{m}{2}\rfloor \right  ) =\mathrm{PRM}^*(1,m-1,3)$, that is, $\mathcal{C}\left (0,1, \dots, \lfloor \frac{m}{2}\rfloor \right  )$
is the even-like subcode of the first order projective generalized Reed-Muller code $\mathrm{PRM}(1,m-1,3)$.

\rm(ii) $\mathcal{C}\left (0,1, \dots, \lfloor \frac{m}{2}\rfloor \right  )^{\perp}$ is the code $\mathsf{C}_{3}\left (\mathbb{PD}(1,m-1,3) \right )$
of the Steiner system $\mathbb{PD}(1,m-1,3)$.

\rm(iii)  $\mathcal{C}\left (0,1, \dots, \lfloor \frac{m}{2}\rfloor \right  )^{\perp}$ has minimum distance $4$
and the minimum-weight codewords  are the multiples of the characteristic  vectors of the blocks of $\mathbb{PD}(1,m-1,3)$.

\rm(iv) The number $A_4^{\perp}$ of codewords with Hamming weight $4$ of $\mathcal{C}\left (0,1, \dots, \lfloor \frac{m}{2}\rfloor \right  )^{\perp}$
is $\frac{(3^m-1)(3^{m-1}-1)}{8}$.
\end{theorem}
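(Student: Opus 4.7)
The plan is to follow and formalize the construction already set up in the paragraphs preceding the theorem. For part (i), I first verify that the map $\pi$ really is a well-defined inclusion into $\mathrm{PRM}^*(1,m-1,3)$. The quadratic form $F_{a_0,\dots,a_{\lfloor m/2\rfloor}}$ in (\ref{eq:f-F}) is a homogeneous polynomial of degree $2$, so it is a $\mathrm{GF}(3)$-linear combination of the monomials $x_ix_j$, each of which satisfies $\sum_k i_k = 2 \equiv 0 \pmod{q-1}$ with $0 < \sum_k i_k \le 1\cdot(q-1)$, and hence lies in $\mathrm{PP}(1,m-1,3)$. Since $F_{a_0,\dots,a_{\lfloor m/2\rfloor}}$ is homogeneous of degree $2$ and every scalar in $\mathrm{GF}(3)^{*}$ squares to $1$, evaluation on $\rho(\alpha^i)$ is independent of the projective representative, and as $i$ ranges over $0,\dots,\frac{3^m-1}{2}-1$ the vectors $\rho(\alpha^i)$ exhaust the $\frac{3^m-1}{2}$ projective points of $\mathrm{PG}(m-1,3)$. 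The identity $F_{a_0,\dots,a_{\lfloor m/2\rfloor}}(\rho(\alpha^i)) = f_{a_0,\dots,a_{\lfloor m/2\rfloor}}(\alpha^i)$ then realizes $\pi$ as the identity map on codewords, so the inclusion $\mathcal{C}(0,1,\dots,\lfloor m/2\rfloor) \subseteq \mathrm{PRM}^{*}(1,m-1,3)$ is immediate.

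To upgrade this inclusion to an equality, I match dimensions. Proposition \ref{prop:cyc-dim} gives $\dim \mathcal{C}(0,1,\dots,\lfloor m/2\rfloor) = \frac{m(m+1)}{2}$ (both the odd and even $m$ branches collapse to this value). On the other hand, Theorem \ref{thm:PG-PRM}(iii) identifies $\mathrm{PRM}^{*}(1,m-1,3)$ with $\mathsf{C}_{3}(\mathbb{PD}(1,m-1,3))^{\perp}$ up to coordinate permutation, and Theorem \ref{thm:PG-PRM}(iv) with $r=2$ yields
\[
\dim \mathsf{C}_{3}\bigl(\mathbb{PD}(1,m-1,3)\bigr) = \frac{3^m-1}{2} - \binom{m+1}{2},
\]
so the dual has dimension exactly $\frac{m(m+1)}{2}$. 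Thus equality of dimensions forces equality in (i).

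Parts (ii) and (iii) are then immediate consequences. Dualizing (i) and invoking Theorem \ref{thm:PG-PRM}(iii) gives $\mathcal{C}(0,1,\dots,\lfloor m/2\rfloor)^{\perp} = \mathsf{C}_{3}(\mathbb{PD}(1,m-1,3))$ (up to a coordinate permutation that preserves the design structure), which is (ii). For (iii), Theorem \ref{thm:PG-PRM}(ii) applied with $p=3$ and $r=2$ yields minimum weight $\frac{3^2-1}{3-1}=4$, and asserts that the minimum-weight codewords are precisely the nonzero $\mathrm{GF}(3)$-multiples of the characteristic vectors of the blocks of $\mathbb{PD}(1,m-1,3)$.

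For (iv), it suffices to count blocks of the Steiner system $S(2,4,\frac{3^m-1}{2})$: a standard double count gives
\[
b = \frac{\binom{(3^m-1)/2}{2}}{\binom{4}{2}} = \frac{(3^m-1)(3^{m-1}-1)}{16}.
\]
By (iii), each block contributes exactly $|\mathrm{GF}(3)^{*}| = 2$ minimum-weight codewords and distinct blocks contribute disjoint sets (distinct supports), giving $A_4^{\perp} = 2b = \frac{(3^m-1)(3^{m-1}-1)}{8}$. The only substantive step is the dimension bookkeeping used to close the inclusion in part (i); everything else is a direct application of Theorem \ref{thm:PG-PRM}.
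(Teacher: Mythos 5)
Your proposal is correct and follows essentially the same route as the paper: establish the inclusion $\mathcal{C}(0,1,\dots,\lfloor m/2\rfloor)\subseteq\mathrm{PRM}^{*}(1,m-1,3)$ via the quadratic-form map $\pi$, close it by the dimension count $\frac{m(m+1)}{2}$ from Proposition \ref{prop:cyc-dim} and Theorem \ref{thm:PG-PRM}, and then read off (ii)--(iv) from Theorem \ref{thm:PG-PRM} together with the block count of the Steiner system $S(2,4,\frac{3^m-1}{2})$. You merely supply a few details the paper leaves implicit (well-definedness of evaluation at projective points, the explicit evaluation of the dimension formula at $r=2$, and the double count giving $\frac{(3^m-1)(3^{m-1}-1)}{16}$ blocks), all of which check out.
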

\begin{remark}
Form the definition of $\mathbb{PD}(1,m-1,3)$, the Steiner system $\mathbb{PD}(1,m-1,3)$ is also equivalent to
the Steiner system $(\mathcal P, \mathcal B)$, where
\begin{align*}
\mathcal {P}=\{a^2: a\in \mathrm{GF}(3^m)^*\}
\end{align*}
and
\begin{align*}
\mathcal{B}=\left \{\{a^2,b^2, (a+b)^2,(a-b)^2\}: a, b\in \mathrm{GF}(3^m)^* \text{ and } a\neq \pm b \right \}.
\end{align*}
\end{remark}

The minimum distance of the code $\mathcal{C}\left (0,1, \dots, \lfloor \frac{m}{2}\rfloor \right  )$ is described in the next theorem.

\begin{theorem}
Let $m$ be an integer with $m\ge 3$. Then, $\mathcal{C}\left (0,1, \dots, \lfloor \frac{m}{2}\rfloor \right  )$ is a
$[\frac{3^m-1}{2}, \frac{m(m+1)}{2}, 2\cdot 3^{m-2}]$ cyclic code.
\end{theorem}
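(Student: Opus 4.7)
The plan is to exploit Theorem~\ref{thm:C(0,1,...,m/2)}(i), which identifies $\mathcal{C}(0,1,\dots,\lfloor m/2\rfloor)$ with $\mathrm{PRM}^*(1,m-1,3)$. Under this identification, every nonzero codeword has the form $(F(\mathbf{x}^1),\dots,F(\mathbf{x}^N))$, where $N=\frac{3^m-1}{2}$ and $F$ is a nonzero element of $\mathrm{PP}(1,m-1,3)$, i.e.\ a nonzero $m$-ary quadratic form over $\mathrm{GF}(3)$. A dimension count confirms the map $F\mapsto$ codeword is a bijection onto nonzero codewords: $\dim \mathrm{PP}(1,m-1,3)=\binom{m+1}{2}=\frac{m(m+1)}{2}$ agrees with the dimension of the code computed in Proposition~\ref{prop:cyc-dim}, and $\mathrm{PP}(1,m-1,3)$ is exactly the space of all nonzero quadratic forms in $m$ variables over $\mathrm{GF}(3)$.

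Next I would translate ``Hamming weight of the codeword'' into ``number of zeros of $F$ on projective points.'' Since $F(\lambda\mathbf{x})=\lambda^{2}F(\mathbf{x})=F(\mathbf{x})$ for every $\lambda\in\mathrm{GF}(3)^{*}$, $F$ is well defined on $\mathrm{PG}(m-1,3)$, and every projective zero lifts to exactly $q-1=2$ nonzero affine zeros of $F$ on $\mathrm{GF}(3)^{m}$. Therefore the weight of the codeword associated to $F$ equals
\[
\frac{3^m-1}{2}-\frac{N(F=0)-1}{2}=\frac{3^m-N(F=0)}{2},
\]
where $N(F=0)$ is the affine zero count (including the origin).

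The core computation is then to maximise $N(F=0)$ over all nonzero $F$, using formula~(\ref{eq:N(F=0)}). Writing $s=\mathrm{rank}(F)\ge 1$, the odd-rank case gives $N(F=0)=3^{m-1}$, while the even-rank case gives $N(F=0)=3^{m-1}\pm 2\cdot 3^{m-(s+2)/2}$. The supremum is attained at $s=2$ with the $+$ sign, yielding $N(F=0)=3^{m-1}+2\cdot 3^{m-2}=5\cdot 3^{m-2}$, which translates to a codeword of weight
\[
\frac{3^m-5\cdot 3^{m-2}}{2}=\frac{4\cdot 3^{m-2}}{2}=2\cdot 3^{m-2}.
\]
One verifies that a rank-$2$ form with $\eta(a_0a_1)\eta(-1)=+1$ exists over $\mathrm{GF}(3)$ (any such diagonal form with suitable coefficients does the job), and since our space contains all quadratic forms, such an $F$ does occur. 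Comparing this weight with the values obtained from all other ranks confirms that no nonzero codeword has strictly smaller weight.

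The main potential obstacle is the bookkeeping around ``projective versus affine zeros,'' including ensuring that the evaluation map from $\mathrm{PP}(1,m-1,3)$ to codewords is injective; this is handled cleanly by the dimension count cited above, so there is no genuine technical difficulty. Everything else reduces to the standard classification of quadratic forms over $\mathrm{GF}(3)$ and a direct application of~(\ref{eq:N(F=0)}).
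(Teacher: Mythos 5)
Your proposal is correct and follows essentially the same route as the paper: identify the code with $\mathrm{PRM}^*(1,m-1,3)$ so that codeword weights become $\frac{3^m-N(F=0)}{2}$ for nonzero quadratic forms $F$, apply the zero-count formula (\ref{eq:N(F=0)}) to see the minimum occurs at rank $s=2$ with the favourable sign, exhibit such a form (the paper uses $F=x_0^2-x_1^2$), and read off the dimension from Proposition~\ref{prop:cyc-dim}. No substantive differences.
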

\begin{proof}
From Part (i) of  Theorem \ref{thm:C(0,1,...,m/2)}, for any nonzero codeword $\mathbf{c}=(c_0, \dots, c_{n-1})$, there is a unique  quadratic form
$F\in \mathrm{PP}(1,m-1,3)$ such that $c_i=F(\rho(\alpha^i))$, where $n=\frac{3^m-1}{2}$.
Then,
\begin{align*}
\mathrm{wt}(\mathbf{c})=\frac{3^m-N(F=0)}{2},
\end{align*}
where $F$ was defined in (\ref{eq:N(F=0)}) and $\mathrm{wt}(\mathbf{c})$ is the Hamming weight of $\mathbf{c}$. Suppose that $F$ is equivalent to
the diagonal  form $a_0 x_0^2+\cdots +a_{s-2} x_{s-2}^2 + a_{s-1}x_{s-1}^2$ with  $a_i\in \mathrm{GF}(3)^{*}$.
Using (\ref{eq:N(F=0)}), one gets
\begin{align*}
\mathrm{wt}(\mathbf{c})=
\left\{
  \begin{array}{ll}
   3^{m-1}, & \text{ if } s\equiv 1 \pmod{2}, \\
   3^{m-1}\pm 3^{m-1-\frac{s}{2}}, & \text{ if } s\equiv 0 \pmod{2}.
  \end{array}
\right.
\end{align*}
Since $F\neq 0$, then $s\ge 1$. Thus, $\mathrm{wt}(\mathbf{c}) \ge 3^{m-1}-3^{m-1-1}=2\cdot 3^{m-2}$.
In addition, choose $F=x_0^2-x_1^2$ and $\mathbf{c}=\left( F(\rho(\alpha^i)) \right )_{i=0}^{n-1}$.
Then, $\mathrm{wt}(\mathbf{c})=2\cdot 3^{m-2}$. Hence, the minimum weight of $\mathcal{C}\left (0,1, \dots, \lfloor \frac{m}{2}\rfloor \right  )$
is $2\cdot 3^{m-2}$. From Proposition \ref{prop:cyc-dim}, this theorem follows.
\end{proof}

\section{Shortened codes and punctured codes from $\mathcal C(E)$}

In this section, we present some ternary codes by shortening and puncturing
some subcodes of $\mathcal {C} \left (0,1,\dots, \lfloor  \frac{m}{2} \rfloor \right)$
with the weight distribution in Table  \ref{table:3W}, and determine their weight distributions.

Let $\mathcal  C$ be an $[n,k,d]$ code and $T$  a  set of $t$ coordinate positions in $\mathcal  C$.
We use $\mathcal  C^T$ to denote the code obtained by puncturing $\mathcal  C$  on
$T$, which is called the  \emph{punctured code}  of $\mathcal C$ on $T$.
Let $\mathcal C(T)$ be the subcode of $\mathcal C$, which is the set of codewords which are
$\mathbf{0}$ on $T$.
We now puncture $\mathcal C(T)$ on $T$, and obtain a linear code $\mathcal C_{T}$, which is called the \emph{shortened code} of $\mathcal C$ on $T$.
We have the following result on the punctured code and shortened code of a code $\mathcal{C}$ \cite{HP10}:
\begin{align}\label{eq:sh-puct}
 \left (\mathcal C_{T} \right )^{\perp} = \left ( \mathcal C^{\perp} \right )^{T}.
\end{align}

To determine the weight distributions of shortened codes
from $\mathcal C(E)$, we will need the following lemma.

\begin{lemma}\label{lem:C-T}
Let $\mathcal C$ be an $[n,k,d]$ code over $\mathrm{GF}(q)$ and  $d^{\perp}$  the minimum distance of $\mathcal  C^{\perp}$.
Let $i_1, \dots, i_s$ be $s$ positive integers and $T$ a set of $t$ coordinate positions of $\mathcal C$,  where $i_1<\cdots <i_s\le n$ and  $t<d^{\perp}$. Suppose that  $A_i(\mathcal  C)=0$ for any $i\not \in \{0, i_1, \dots, i_s \}$ and $A_1(\left ( \mathcal C^{\perp} \right )^{T})$, $\dots$,  $A_{s-1}(\left ( \mathcal C^{\perp} \right )^{T})$ are independent of the elements of $T$.
Then, the code $\mathcal C_{T}$ has dimension $k-t$.
Furthermore, the weight distributions of $\mathcal C_{T}$ and $\left ( \mathcal C^{\perp} \right )^{T}$ are independent of
the elements of $T$ and can be determined from the first $s$ equations in (\ref{eq:PPM}).
\end{lemma}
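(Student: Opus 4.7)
The plan is to combine the duality identity $(\mathcal{C}_T)^{\perp}=(\mathcal{C}^{\perp})^T$ from (\ref{eq:sh-puct}) with the first $s$ Pless power moments from (\ref{eq:PPM}). For the dimension, I would note that since $t<d^{\perp}$ no nonzero codeword of $\mathcal{C}^{\perp}$ is supported inside $T$, so the puncturing map on $T$ restricted to $\mathcal{C}^{\perp}$ is injective and $\dim(\mathcal{C}^{\perp})^T=n-k$. Combining this with (\ref{eq:sh-puct}) and the ambient length $n-t$ forces $\dim\mathcal{C}_T=(n-t)-(n-k)=k-t$.

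For the weight distribution of $\mathcal{C}_T$, I would first observe that every nonzero codeword of $\mathcal{C}_T$ is obtained from a codeword of $\mathcal{C}$ vanishing on $T$ by deleting those $t$ zero coordinates, which preserves Hamming weight; hence $A_i(\mathcal{C}_T)=0$ for every $i\notin\{0,i_1,\ldots,i_s\}$, leaving only $s$ unknowns $A_{i_1}(\mathcal{C}_T),\ldots,A_{i_s}(\mathcal{C}_T)$. Applying (\ref{eq:PPM}) to the dual pair $\bigl(\mathcal{C}_T,(\mathcal{C}^{\perp})^T\bigr)$ for $t'=0,1,\ldots,s-1$ yields
\[
\sum_{j=1}^{s}i_j^{\,t'}\,A_{i_j}(\mathcal{C}_T)\;=\;R(t'),\qquad t'=0,1,\ldots,s-1,
\]
where $R(t')$ depends only on $A_0((\mathcal{C}^{\perp})^T)=1$ and $A_1((\mathcal{C}^{\perp})^T),\ldots,A_{s-1}((\mathcal{C}^{\perp})^T)$, all of which are by hypothesis independent of $T$. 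The $s\times s$ coefficient matrix $(i_j^{\,t'})$ is a Vandermonde matrix in the distinct integers $i_1<\cdots<i_s$ and therefore invertible, so the $A_{i_j}(\mathcal{C}_T)$ are uniquely determined and independent of $T$.

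Once the weight distribution of $\mathcal{C}_T$ is in hand, the MacWilliams identity (equivalently, the remaining Pless moments, as in Theorem \ref{eq:wt-wtd}) determines the full weight distribution of $(\mathcal{C}^{\perp})^T$, and the $T$-independence propagates. I do not foresee a serious obstacle: the dimension step is the standard puncturing bound applied to $\mathcal{C}^{\perp}$, and the main computation reduces to inverting a Vandermonde system whose nonsingularity follows at once from $i_1<\cdots<i_s$. The only care needed is to verify directly from the outer summation $\sum_{i=0}^{t'}$ in (\ref{eq:PPM}) that the first $s$ moments involve no dual weight above index $s-1$, so that the hypothesis on $A_1^{\perp},\ldots,A_{s-1}^{\perp}$ really suffices.
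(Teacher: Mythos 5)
Your proposal is correct and follows essentially the same route as the paper: the paper cites the standard facts from Huffman--Pless that $t<d^{\perp}$ gives $\dim\mathcal{C}_T=k-t$ and $(\mathcal{C}_T)^{\perp}=(\mathcal{C}^{\perp})^T$, and then invokes Theorem \ref{eq:wt-wtd} as a black box, whereas you additionally unpack that theorem into the explicit Vandermonde argument on the first $s$ Pless power moments. The extra detail is sound (the coefficient matrix $(i_j^{\,t'})$ is indeed nonsingular since the $i_j$ are distinct positive integers, and the right-hand sides involve only $A_0^{\perp},\ldots,A_{s-1}^{\perp}$ of $(\mathcal{C}^{\perp})^T$), so no gap remains.
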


\begin{proof}
Form \cite{HP10}, $\mathcal C_{T}$ has dimension $k-t$, and $\left (\mathcal C_{T} \right )^{\perp} = \left ( \mathcal C^{\perp} \right )^{T}$. The desired conclusions of this lemma
then follow from Theorem \ref{eq:wt-wtd}.
\end{proof}

The next lemma will be useful in the sequel \cite[Lemma 8.4]{Dingbk18}.

\begin{lemma}\label{eq:dual-abc}
Let $m\ge 3$ be odd. Let  $\mathcal C$ be a  subcode of $\mathcal C \left (0,1, \dots, \lfloor \frac{m}{2} \rfloor \right )$
with the weight distribution in Table  \ref{table:3W}. Then, the weight distribution $A_1^{\perp}, \dots, A_{(3^m-1)/2}^{\perp}$ of $\mathcal C^{\perp}$
is given by
\begin{align*}
3^{2m}A_{k}^{\perp}= & \sum_{i=0}^{k}
 (-1)^{i}2^{k-i} a \binom{3^{m-1}-3^{(m-1)/2}}{i}  \binom{\frac{   3^{m-1} +2\cdot  3^{(m-1)/2} -1 }{2} }{k-i} \\
 & +\binom{\frac{3^m-1}{2}}{k} 2^k   +       \sum_{i=0 }^k
 (-1)^{i}2^{k-i} b \binom{3^{m-1}}{i}  \binom{\frac{   3^{m-1}  -1 }{2} }{k-i} \\
& +   \sum_{i=0}^k
 (-1)^{i}2^{k-i} c \binom{3^{m-1}+3^{(m-1)/2}}{i}  \binom{\frac{   3^{m-1} -2\cdot  3^{(m-1)/2} -1 }{2} }{k-i}
\end{align*}
for $0\le k \le \frac{3^m-1}{2}$, where
\begin{align*}
a=&\frac{1}{2}\cdot \left( 3^{m-1}+3^{\frac{m-1}{2}}  \right ) \left ( 3^m-1 \right ),\\
b=&(3^m-3^{m-1}+1)(3^m-1),\\
c=& \frac{1}{2}\cdot \left( 3^{m-1}-3^{\frac{m-1}{2}}  \right ) \left ( 3^m-1 \right ).
\end{align*}

In addition, $\mathcal C^{\perp}$ has parameters $[\frac{3^m-1}{2}, \frac{3^m-1}{2}-2m, 4]$ and $A_4^{\perp}=\frac{(3^m-1)(3^{m-1}-1)}{8}$.
\end{lemma}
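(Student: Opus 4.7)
The plan is to apply the ternary MacWilliams identity directly to the weight enumerator of $\mathcal{C}$ recorded in Table~\ref{table:3W}. A quick preliminary computation gives
$$
1 + a + b + c \;=\; 1 + 3^{m-1}(3^m-1) + (2\cdot 3^{m-1}+1)(3^m-1) \;=\; 3^{2m},
$$
so Table~\ref{table:3W} accounts for every codeword of $\mathcal{C}$. Hence $|\mathcal{C}| = 3^{2m}$, $\dim\mathcal{C} = 2m$, and $\dim \mathcal{C}^{\perp} = \frac{3^m-1}{2} - 2m$; the length of $\mathcal{C}^{\perp}$ is clearly $\frac{3^m-1}{2}$.

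Setting $n = \frac{3^m-1}{2}$, I would invoke the ternary MacWilliams identity
$$
3^{2m}\, W_{\mathcal{C}^{\perp}}(x,y) \;=\; W_{\mathcal{C}}(x + 2y,\, x - y).
$$
Expanding each summand $(x + 2y)^{n-w}(x - y)^{w}$ by the binomial theorem and extracting the coefficient of $x^{n-k}y^{k}$ (combining terms with $i + j = k$) yields
$$
3^{2m}\, A_{k}^{\perp} \;=\; \sum_{w} A_{w}\sum_{i=0}^{k}(-1)^{i}\, 2^{k-i}\binom{w}{i}\binom{n-w}{k-i}.
$$
Now I would substitute the four weights $0$, $w_{1} = 3^{m-1} - 3^{(m-1)/2}$, $w_{2} = 3^{m-1}$, $w_{3} = 3^{m-1} + 3^{(m-1)/2}$ with multiplicities $1$, $a$, $b$, $c$, and use the elementary identities $n - w_{1,3} = \frac{3^{m-1} \pm 2\cdot 3^{(m-1)/2} - 1}{2}$ and $n - w_{2} = \frac{3^{m-1} - 1}{2}$. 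Collecting the four contributions reproduces the displayed closed form for $A_{k}^{\perp}$ term by term.

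For the concluding parameter assertion, I would specialise this formula at $k = 1, 2, 3, 4$. Routine binomial simplifications, together with the identity $1 + a + b + c = 3^{2m}$ and the explicit values of $a$, $b$, $c$, force $A_{1}^{\perp} = A_{2}^{\perp} = A_{3}^{\perp} = 0$ and $A_{4}^{\perp} = \frac{(3^{m}-1)(3^{m-1}-1)}{8}$; combined with the length and dimension already obtained, this pins down the parameters $\bigl[\tfrac{3^{m}-1}{2},\, \tfrac{3^{m}-1}{2} - 2m,\, 4\bigr]$ of $\mathcal{C}^{\perp}$.

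The only genuine obstacle is the arithmetic bookkeeping at $k \le 4$: each such sum contains $3^{2m}$-order terms that must cancel, while the residual $3^{(m-1)/2}$ cross terms have to combine into the clean closed form for $A_{4}^{\perp}$. A convenient sanity check is that the inclusion $\mathcal{C} \subseteq \mathcal{C}(0,1,\ldots,\lfloor m/2\rfloor) = \mathrm{PRM}^{*}(1,m-1,3)$ forces $\mathcal{C}^{\perp} \supseteq \mathsf{C}_{3}(\mathbb{PD}(1,m-1,3))$, which by Theorem~\ref{thm:C(0,1,...,m/2)}(iv) already contributes $\frac{(3^{m}-1)(3^{m-1}-1)}{8}$ codewords of weight $4$; matching this lower bound against the MacWilliams evaluation of $A_{4}^{\perp}$ confirms the final value without relying on the messiest portion of the binomial simplification.
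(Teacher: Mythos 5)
Your proposal is correct: the displayed formula is exactly the ternary MacWilliams identity $3^{2m}W_{\mathcal{C}^{\perp}}(x,y)=W_{\mathcal{C}}(x+2y,x-y)$ expanded coefficient-by-coefficient with the four weights of Table~\ref{table:3W}, and the count $1+a+b+c=3^{2m}$ together with the evaluations at $k=1,2,3,4$ gives the stated parameters. The paper itself offers no proof of this lemma --- it is quoted from \cite[Lemma 8.4]{Dingbk18} --- and your derivation is the standard one that source uses, so there is nothing substantive to compare.
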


Let $W_i(\mathcal C)$ denote the set of codewords of weight $i$ in a code $\mathcal  C$.

\begin{lemma}\label{lem:A12=0}
Let $m\ge 3$ be odd and $\mathcal C$ be a  subcode of $\mathcal C \left (0,1, \dots, \lfloor \frac{m}{2} \rfloor \right )$
with the weight distribution in Table \ref{table:3W}.  Let $T$ be a  set of $t$ coordinate positions in $\mathcal  C$.

\rm{(i)} If $t=1$, then $A_1\left ( \left (\mathcal C^{\perp} \right )^{T}  \right )=A_2\left ( \left (\mathcal C^{\perp} \right )^{T}  \right )=0$.

\rm{(ii)} If $t=2$, then  $A_1\left ( \left (\mathcal C^{\perp} \right )^{T}  \right )=0$
and $A_2\left ( \left (\mathcal C^{\perp} \right )^{T}  \right )=2$.

\rm(iii) $W_4\left( \mathcal C^{\perp} \right)= W_4\left( \mathcal C \left (0,1, \dots, \lfloor \frac{m}{2} \rfloor \right )^{\perp} \right)$.
\end{lemma}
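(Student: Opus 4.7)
The plan is to dispose of part (iii) first, because once the weight-$4$ codewords of $\mathcal{C}^{\perp}$ are identified with those of the larger dual code, parts (i) and (ii) will reduce to a routine weight-counting argument. For (iii), the containment $\mathcal{C} \subseteq \mathcal{C}(0,1,\dots,\lfloor m/2 \rfloor)$ dualizes to $\mathcal{C}(0,1,\dots,\lfloor m/2 \rfloor)^{\perp} \subseteq \mathcal{C}^{\perp}$, which already gives $W_4(\mathcal{C}(0,1,\dots,\lfloor m/2 \rfloor)^{\perp}) \subseteq W_4(\mathcal{C}^{\perp})$. I would then compare cardinalities: Lemma \ref{eq:dual-abc} yields $A_4(\mathcal{C}^{\perp}) = (3^m-1)(3^{m-1}-1)/8$, and Theorem \ref{thm:C(0,1,...,m/2)}(iv) supplies exactly the same number for the larger dual code. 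Equality of cardinalities of a containment of finite sets then forces the two sets of weight-$4$ codewords to coincide.

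For parts (i) and the first claim of (ii), the elementary puncturing identity $\mathrm{wt}(\mathbf{v}) = \mathrm{wt}(\mathbf{v}^T) + \#\{i \in T : v_i \neq 0\}$, valid for any $\mathbf{v} \in \mathcal{C}^{\perp}$ with puncture $\mathbf{v}^T \in (\mathcal{C}^{\perp})^T$, does all of the work. If $|T|=1$, then a weight-$1$ (resp.\ weight-$2$) codeword of $(\mathcal{C}^{\perp})^T$ lifts to a codeword of $\mathcal{C}^{\perp}$ of weight in $\{1,2\}$ (resp.\ $\{2,3\}$); if $|T|=2$, a weight-$1$ codeword of $(\mathcal{C}^{\perp})^T$ lifts to a codeword of $\mathcal{C}^{\perp}$ of weight in $\{1,2,3\}$. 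Every one of these weights falls strictly below the minimum distance $d(\mathcal{C}^{\perp})=4$ recorded in Lemma \ref{eq:dual-abc}, so no such codewords exist and the corresponding counts vanish.

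For the remaining claim $A_2((\mathcal{C}^{\perp})^T)=2$, the same lifting identity shows that any weight-$2$ codeword of $(\mathcal{C}^{\perp})^T$ must come from a codeword $\mathbf{v} \in \mathcal{C}^{\perp}$ of weight exactly $4$ whose support contains both points of $T$. By part (iii) together with Theorem \ref{thm:C(0,1,...,m/2)}(iii), such a $\mathbf{v}$ is a nonzero scalar multiple of the characteristic vector of a block of the Steiner system $\mathbb{PD}(1,m-1,3)=S(2,4,(3^m-1)/2)$; the two points of $T$ lie in a unique such block, and its characteristic vector admits exactly two nonzero scalar multiples over $\mathrm{GF}(3)$, yielding the two required codewords. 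The entire argument is structural rather than computational, so I do not anticipate a serious obstacle; the only care needed is the weight-lifting book-keeping and the invocation of Steiner-system uniqueness for the final count.
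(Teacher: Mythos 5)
Your proposal is correct and follows essentially the same route as the paper: parts (i) and the first half of (ii) from $d(\mathcal C^{\perp})=4$ via Lemma \ref{eq:dual-abc}, part (iii) by dualizing the inclusion and matching the cardinality $\frac{(3^m-1)(3^{m-1}-1)}{8}$ from Theorem \ref{thm:C(0,1,...,m/2)}(iv) against Lemma \ref{eq:dual-abc}, and the count $A_2\left(\left(\mathcal C^{\perp}\right)^{T}\right)=2$ from the Steiner-system property together with the two nonzero scalar multiples of a block's characteristic vector. Your write-up is merely more explicit than the paper's about the weight-lifting bookkeeping under puncturing; no substantive difference.
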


\begin{proof}
By Lemma \ref{eq:dual-abc}, the minimum weight of $\mathcal C^{\perp}$ is $4$. Thus,
$A_1\left ( \left (\mathcal C^{\perp} \right )^{T}  \right )=A_2\left ( \left (\mathcal C^{\perp} \right )^{T}  \right )=0$ for $t=1$
and $A_1\left ( \left (\mathcal C^{\perp} \right )^{T}  \right )=0$ for $t=2$.
Since $\mathcal C \subseteq \mathcal C\left (0,1, \dots, \lfloor \frac{m}{2} \rfloor \right )$,
$ \mathcal C\left (0,1, \dots, \lfloor \frac{m}{2} \rfloor \right )^{\perp} \subseteq \mathcal C^{\perp}$
and $W_4\left ( \mathcal C\left (0,1, \dots, \lfloor \frac{m}{2} \rfloor \right )^{\perp} \right ) \subseteq W_4\left (\mathcal C^{\perp} \right )$.
Combining Part (iv) of Theorem \ref{thm:C(0,1,...,m/2)} and Lemma \ref{eq:dual-abc}, one obtains
$\# W_4\left ( \mathcal C\left (0,1, \dots, \lfloor \frac{m}{2} \rfloor \right )^{\perp} \right ) =\# W_4\left (\mathcal C^{\perp} \right )$.
As a result,
$$
W_4\left ( \mathcal C\left (0,1, \dots, \lfloor \frac{m}{2} \rfloor \right )^{\perp} \right ) = W_4\left (\mathcal C^{\perp} \right ).
$$
By Part (iii) of Theorem \ref{thm:C(0,1,...,m/2)}, $ W_4\left (\mathcal C^{\perp} \right )$
is the set of  the multiples of the characteristic  vectors of the blocks of  $\mathbb{PD}(1,m-1,3)$.
Since $\mathbb{PD}(1,m-1,3)$ is a Steiner system $S(2,4, \frac{3^m-1}{2})$, $A_2\left ( \left (\mathcal C^{\perp} \right )^{T}  \right )=2$.
This completes the proof.
\end{proof}

For $T=\{t\}$ and
$T=\{t_1,t_2\}$, we determine the weight distribution of the shortened code $\mathcal C_{T}$  of some subcodes of $\mathcal C \left (0,1, \dots, \lfloor \frac{m}{2} \rfloor \right )$.

\begin{theorem}\label{thm:sc1}
Let $t$ be an integer and  $m\ge 3$ odd,
where $0 \le t \le \frac{3^m-1}{2}-1$. Let $\mathcal C$ be a subcode of $\mathcal C \left (0,1, \dots, \lfloor \frac{m}{2} \rfloor \right )$
with the weight distribution in Table  \ref{table:3W}.
Then, the shortened code $\mathcal C_{\{t\}}$ is a ternary linear code of length $\frac{3^m-1}{2}-1$ and dimension $2m-1$, and has the weight distribution in Table \ref{table:3W-1}.
\begin{table}[htbp]
\centering
\caption{The weight distribution of the shortened code  $\mathcal C_{\{t\}}$}
\label{table:3W-1}
\begin{tabular}{|c|c|}
  \hline
  Weight & Frequency \\
  \hline
  $0$& $1$\\
\hline
$3^{m-1}-3^{\frac{m-1}{2}}$ & $ \frac{1}{2} \cdot \left ( 3^{m-1} +2\cdot 3^{\frac{m-1}{2}}-1  \right ) \left ( 3^{m-1} +3^{\frac{m-1}{2}}\right )     $\\
\hline
$3^{m-1}$&
 $\left ( 2\cdot 3^{m-1} + 1\right ) \left ( 3^{m-1} -1\right )  $\\
 \hline
 $3^{m-1}+3^{\frac{m-1}{2}}$ & $ \frac{1}{2} \cdot \left ( 3^{m-1} -2\cdot 3^{\frac{m-1}{2}}-1  \right ) \left ( 3^{m-1} -3^{\frac{m-1}{2}}\right )      $\\
\hline
\end{tabular}
\end{table}

\end{theorem}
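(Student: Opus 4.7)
The plan is to invoke Lemma \ref{lem:C-T} to get simultaneously the dimension, the independence from $t$, and the computational route, and then to pin down the three unknown frequencies by a short double-counting argument.

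First, I would verify the hypotheses of Lemma \ref{lem:C-T}. The code $\mathcal C$ has three nonzero weights $w_1 = 3^{m-1} - 3^{(m-1)/2}$, $w_2 = 3^{m-1}$, $w_3 = 3^{m-1} + 3^{(m-1)/2}$, so I take $s = 3$. By Lemma \ref{eq:dual-abc}, $d(\mathcal C^{\perp}) = 4$, hence $|T| = 1 < 4 = d^{\perp}$. Lemma \ref{lem:A12=0}(i) gives $A_1((\mathcal C^{\perp})^{\{t\}}) = A_2((\mathcal C^{\perp})^{\{t\}}) = 0$, which are trivially independent of $t$. Thus Lemma \ref{lem:C-T} applies and yields $\dim \mathcal C_{\{t\}} = 2m-1$, together with a weight distribution independent of the choice of $t \in \{0,1,\dots,(3^m-1)/2-1\}$.

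Second, to compute the three frequencies I would use the just-established independence to set up a double count. Since $\mathcal C_{\{t\}}$ is obtained by deleting coordinate $t$ from codewords of $\mathcal C$ that already vanish there, a codeword of $\mathcal C$ of weight $w$ descends to a codeword of $\mathcal C_{\{t\}}$ of the same weight $w$. Counting pairs $(\mathbf c, t)$ with $\mathbf c \in \mathcal C$, $\mathrm{wt}(\mathbf c) = w$, and $c_t = 0$ in two ways gives
\begin{equation*}
\frac{3^m-1}{2}\cdot A_w(\mathcal C_{\{t\}}) \;=\; A_w(\mathcal C)\left(\frac{3^m-1}{2}-w\right),
\end{equation*}
so $A_w(\mathcal C_{\{t\}}) = A_w(\mathcal C)\bigl(1 - 2w/(3^m-1)\bigr)$ for each $w \in \{w_1,w_2,w_3\}$. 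Substituting the values from Table \ref{table:3W} and using the identity $(3^m-1)/2 - 3^{m-1} = (3^{m-1}-1)/2$, I would then simplify each product to check that it matches the three entries of Table \ref{table:3W-1}. For instance, $(3^m-1)/2 - w_1 = (3^{m-1} + 2\cdot 3^{(m-1)/2}-1)/2$, and multiplying by $A_{w_1}(\mathcal C)/((3^m-1)/2) = (3^{m-1}+3^{(m-1)/2})$ yields exactly $\tfrac{1}{2}(3^{m-1}+2\cdot 3^{(m-1)/2}-1)(3^{m-1}+3^{(m-1)/2})$, and similarly for $w_2$ and $w_3$.

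There is no serious obstacle here: all the heavy lifting is done by Lemmas \ref{lem:C-T} and \ref{lem:A12=0}(i), and the remaining work is routine algebraic simplification. The only mild subtlety is that the integrality of $A_w(\mathcal C)\bigl((3^m-1)/2-w\bigr)/((3^m-1)/2)$ is not a priori obvious, but it is automatic from the independence conclusion of Lemma \ref{lem:C-T}. As an equivalent alternative more in keeping with the preceding framework, one could solve the $3\times 3$ linear system given by the first three Pless power moment equations \eqref{eq:PPM} with $A_1^{\perp} = A_2^{\perp} = 0$; this is precisely the mechanism underlying Lemma \ref{lem:C-T}, but it is computationally heavier than the double count above.
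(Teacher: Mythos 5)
Your proof is correct, and it diverges from the paper's in the way it extracts the three frequencies. Both arguments begin identically: you invoke Lemma \ref{eq:dual-abc} to get $d(\mathcal C^{\perp})=4$, Lemma \ref{lem:A12=0}(i) for $A_1=A_2=0$ of the punctured dual, and Lemma \ref{lem:C-T} for the length, the dimension $2m-1$, and the independence of the weight distribution from $t$. The paper then finishes by writing down the first three Pless power moments of $\mathcal C_{\{t\}}$ and solving the resulting $3\times 3$ linear system in $A_{i_1},A_{i_2},A_{i_3}$. You instead exploit the independence conclusion directly: counting pairs $(\mathbf c,t)$ with $\mathrm{wt}(\mathbf c)=w$ and $c_t=0$ in two ways gives $n\,A_w(\mathcal C_{\{t\}})=A_w(\mathcal C)(n-w)$ with $n=\frac{3^m-1}{2}$, and each frequency drops out in one line; I checked all three against Table \ref{table:3W-1} and they match. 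Your route is shorter and more transparent (each shortened frequency is visibly $A_w(\mathcal C)(n-w)/n$), at the cost of leaning on the independence statement as an input rather than as a byproduct; the paper's power-moment computation is heavier but is the same machinery it reuses verbatim for the two-point shortening in Theorem \ref{thm:sc2}, where your double count would need the slightly more involved $\binom{n-w}{2}$ version. One small remark: your closing worry about integrality is moot, since $A_w(\mathcal C)/n$ is already an integer for each of the three weights, as your own computation shows.
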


\begin{proof}
It follows from Lemma \ref{eq:dual-abc} that $d(\mathcal C^\perp)=4$. By Lemma \ref{lem:C-T}, $\mathcal C_{\{t\}}$ has length $n=\frac{3^m-1}{2}-1$ and dimension $k=2m-1$.
Note that $A_i=A_i\left (\mathcal C_{\{t\}} \right )=0$ for $i\not \in \{0, i_1, i_2, i_3\} $, where $i_1=3^{m-1}-3^{\frac{m-1}{2}}$, $i_2=3^{m-1}$
and $i_3=3^{m-1}+3^{\frac{m-1}{2}}$.
It follows from (\ref{eq:sh-puct}) and Lemma \ref{lem:A12=0} that
$$
A_1(\left (\mathcal C_{\{t\}} \right )^\perp)=A_2(\left (\mathcal C_{\{t\}} \right )^\perp)=0.
$$
The first three Pless power moments in (\ref{eq:PPM}) give
\begin{align*}
\left\{
  \begin{array}{l}
    A_{i_1} + A_{i_2} +A_{i_3} = 3^{2m-1}-1,  \\
    i_1 A_{i_1} + i_2  A_{i_2} + i_3 A_{i_3}  = 2\cdot 3^{2m-1-1}n, \\
    i_1^2 A_{i_1} + i_2^2  A_{i_2} + i_3^2 A_{i_3}  = 2\cdot 3^{2m-1-2}n(2n+1).
  \end{array}
\right.
\end{align*}
Solving this system of equations yields the weight distribution in Table \ref{table:3W-1}.
\end{proof}

\begin{example}
Let $m=5$, $E=\{0,1\}$, $0\le t \le \frac{3^m-1}{2}-1$ and $\mathcal C=\mathcal {C}(E)$. Then $\mathcal C$ has the weight distribution in Table \ref{table:3W}.
Furthermore, the shortened code $\mathcal C_{\{t\}}$ has parameters $[120,9,72]$ and weight enumerator $1+ 4410z^{72}+13040z^{81}+2232z^{90}$.
This code has the same parameters as the best ternary linear code known in the database maintained  by Markus Grassl.

Magma experiments showed that all the shortened codes $\mathcal C_{\{t\}}$ have the same
weight distribution and are pairwise equivalent.
\end{example}

\begin{theorem}\label{thm:sc2}
Let $t_1$ and $t_2$ be two integers and  $m\ge 3$   odd, where $0 \le t_1 < t_2 \le \frac{3^m-1}{2}-1$.
Let $\mathcal C$ be a subcode of $\mathcal C \left (0,1, \dots, \lfloor \frac{m}{2} \rfloor \right )$
with the weight distribution in Table  \ref{table:3W}.
Then, the shortened code $\mathcal C_{\{t_1, t_2\}}$ is a ternary linear code of length $\frac{3^m-1}{2}-2$ and dimension $2m-2$, and has the weight distribution in Table \ref{table:3W-2}.

\begin{table}[htbp]
\centering
\caption{The weight distribution of the shortened code $\mathcal C_{\{t_1, t_2\}}$}
\label{table:3W-2}
\begin{tabular}{|c|c|}
  \hline
  Weight & Frequency \\
  \hline
  $0$& $1$\\
\hline
$3^{m-1}-3^{\frac{m-1}{2}}$ & $ \frac{1}{6} \cdot \left ( 3^{m-1} +2\cdot 3^{\frac{m-1}{2}}-1  \right ) \left ( 3^{m-1} +3^{\frac{m+1}{2}}\right )     $\\
\hline
$3^{m-1}$&
 $\left ( 2\cdot 3^{m-1} + 1\right ) \left ( 3^{m-2} -1\right )  $\\
 \hline
 $3^{m-1}+3^{\frac{m-1}{2}}$ & $ \frac{1}{6} \cdot \left ( 3^{m-1} -2\cdot 3^{\frac{m-1}{2}}-1  \right ) \left ( 3^{m-1} -3^{\frac{m+1}{2}}\right )      $\\
\hline
\end{tabular}
\end{table}

\end{theorem}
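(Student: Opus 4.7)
The strategy is to adapt the proof of Theorem \ref{thm:sc1} to the case $|T|=2$. First I would appeal to Lemma \ref{eq:dual-abc} to record that $\mathcal C^{\perp}$ has minimum distance $d^{\perp}=4$. Since $|T|=2<4=d^{\perp}$, the hypothesis of Lemma \ref{lem:C-T} is met, which already delivers the length $\frac{3^m-1}{2}-2$ and the dimension $2m-2$ of $\mathcal C_{\{t_1,t_2\}}$. Moreover, every codeword of $\mathcal C_{\{t_1,t_2\}}$ comes from a codeword of $\mathcal C$ that vanishes at the two punctured positions, so the three-weight structure of $\mathcal C$ in Table \ref{table:3W} is inherited; only the weights $i_1:=3^{m-1}-3^{(m-1)/2}$, $i_2:=3^{m-1}$ and $i_3:=3^{m-1}+3^{(m-1)/2}$ can occur, placing the argument squarely in the $s=3$ setting of Lemma \ref{lem:C-T}.

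Next I would gather the dual-side inputs needed to run the Pless power moments. By (\ref{eq:sh-puct}) the dual of the shortened code is the punctured dual $(\mathcal C^{\perp})^{\{t_1,t_2\}}$, and Lemma \ref{lem:A12=0}(ii) furnishes $A_1((\mathcal C^{\perp})^{\{t_1,t_2\}})=0$ and $A_2((\mathcal C^{\perp})^{\{t_1,t_2\}})=2$, both independent of the chosen coordinates. (Recall that the second value follows from Lemma \ref{lem:A12=0}(iii) together with Theorem \ref{thm:C(0,1,...,m/2)}: the weight-$4$ codewords of $\mathcal C^{\perp}$ are precisely the scalar multiples of characteristic vectors of blocks of the Steiner system $S(2,4,\frac{3^m-1}{2})$, so any chosen pair of points lies in a unique block, producing the two weight-$2$ punctured codewords.)

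With these inputs in hand, I would write out the first three Pless power moments from (\ref{eq:PPM}) with $n=\frac{3^m-1}{2}-2$, $k=2m-2$, $q=3$, and dual contributions $A_0^{\perp}=1$, $A_1^{\perp}=0$, $A_2^{\perp}=2$. The $t=0$ equation reduces to $A_{i_1}+A_{i_2}+A_{i_3}=3^{2m-2}-1$; the $t=1$ equation, since $A_1^{\perp}=0$, becomes the single linear relation $i_1 A_{i_1}+i_2 A_{i_2}+i_3 A_{i_3}=2\cdot 3^{2m-3}n$; and the $t=2$ equation carries a correction term from $A_2^{\perp}=2$ absent in the Theorem \ref{thm:sc1} calculation. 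This yields a nonsingular $3\times 3$ Vandermonde-type system in the unknowns $A_{i_1},A_{i_2},A_{i_3}$, which I would solve in closed form.

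The only genuine obstacle is arithmetic: one must carry the $t=2$ correction through the solve and then refactor the three answers so that they match the compact forms in Table \ref{table:3W-2}, in particular producing the overall factor $\tfrac{1}{6}$ and the clean factorizations $3^{m-1}\pm 3^{(m+1)/2}$. No new structural theory is required beyond what was already used for Theorem \ref{thm:sc1}; the new ingredient is simply that with $|T|=2$ the $A_2^{\perp}$ contribution is now nonzero (and, by the Steiner property, equal to $2$ independently of $T$), which is exactly what makes Lemma \ref{lem:C-T} applicable with $s=3$.
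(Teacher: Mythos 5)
Your proposal is correct and follows essentially the same route as the paper's own proof: Lemma \ref{eq:dual-abc} and Lemma \ref{lem:C-T} for the length and dimension, equation (\ref{eq:sh-puct}) with Lemma \ref{lem:A12=0} to get $A_1=0$ and $A_2=2$ for the dual of the shortened code, and then the first three Pless power moments (with the $A_2^{\perp}=2$ correction in the third equation) solved as a $3\times 3$ system. The only difference is presentational: you spell out why the Steiner property forces $A_2=2$, which the paper delegates entirely to Lemma \ref{lem:A12=0}.
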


\begin{proof}
By Lemma \ref{eq:dual-abc}, $d(\mathcal C^\perp)=4$. It follows from Lemma \ref{lem:C-T} that
$\mathcal C_{\{t_1,t_2\}}$ has length $n=\frac{3^m-1}{2}-2$ and dimension $k=2m-2$.
Note that $A_i=A_i\left (\mathcal C_{\{t_1,t_2\}} \right )=0$ for $i\not \in \{0, i_1, i_2, i_3\} $, where $i_1=3^{m-1}-3^{\frac{m-1}{2}}$, $i_2=3^{m-1}$
and $i_3=3^{m-1}+3^{\frac{m-1}{2}}$.
Combining (\ref{eq:sh-puct}) and Lemma \ref{lem:A12=0},
we deduce that
$A_1(\left (\mathcal C_{\{t_1,t_2\}} \right )^\perp)=0$ and $A_2(\left (\mathcal C_{\{t_1,t_2\}} \right )^\perp)=2$.
The first three Pless power moments in (\ref{eq:PPM}) give
\begin{align*}
\left\{
  \begin{array}{l}
    A_{i_1} + A_{i_2} +A_{i_3} = 3^{2m-2}-1,   \\
    i_1 A_{i_1} + i_2  A_{i_2} + i_3 A_{i_3}  = 2\cdot 3^{2m-2-1}n,  \\
   i_1^2 A_{i_1} + i_2^2  A_{i_2} + i_3^2 A_{i_3}  =  3^{2m-2-2}\left [ n(4n+2)+4 \right ].
  \end{array}
\right.
\end{align*}
Solving this system of equations,
we get the weight distribution in Table   \ref{table:3W-2}.
\end{proof}

\begin{example}
Let $m=5$, $E=\{0,1\}$, $0\le t_1< t_2 \le \frac{3^m-1}{2}-1$ and $\mathcal C=\mathcal {C}(E)$. Then $\mathcal C$ has the weight distribution in Table \ref{table:3W}.
Furthermore, the shortened code $\mathcal C_{\{t_1, t_2\}}$ has parameters $[119,8,72]$ and weight enumerator $1+ 1764z^{72}+ 4238z^{81}+558z^{90}$.
This code has the same parameters as the best ternary linear code known in the database maintained  by Markus Grassl.

Magma experiments showed that all the shortened codes $\mathcal C_{\{t_1, t_2\}}$ have the
same weight distribution. However, for many pairs of $(t_1, t_2)$ and $(t'_1, t'_2)$, the codes
$\mathcal C_{\{t_1, t_2\}}$ and $\mathcal C_{\{t'_1, t'_2\}}$ are not equivalent. Therefore, the automorphism group of the
code $\mathcal{C}$ is in general not $2$-homogeneous and $2$-transitive.
\end{example}

To determine the weight distributions of some punctured codes from $\mathcal C(E)$, we need the next lemma.

\begin{lemma}\label{lem:W-C-T}
Let $\mathcal C$ be an $[n,k,d]$ code over $\mathrm{GF}(q)$ and let $d^{\perp}$ denote the minimum distance of $\mathcal  C^{\perp}$.
Let $t$ be a positive  integer and
$T$  a subset of the  coordinate positions of $\mathcal C$,
where $t<d^{\perp}$ and $\# T\le t$.
Suppose that  $A_i(\mathcal C_{T})$
 is  independent of
the elements of $T$ and depends only on the size of $T$.
Define
 \begin{align*}
 W_i(\mathcal C,T)=\left \{\mathbf c=(c_0, \dots, c_{n-1}) \in \mathcal C: \mathrm{wt}(\mathbf c)=i, c_j\neq 0 \text{ for all } j \in T \right \}.
 \end{align*}
 Then, $\# W_i(\mathcal C,T)$ is independent of
the elements of $T$ and depends only on the size of $T$. Moreover,
\begin{align*}
\# W_i(\mathcal C,T)=A_i(\mathcal C)- \sum_{j=1}^{\# T} (-1)^{j-1} \binom{\# T}{j} A_i(\mathcal C_{\{0, 1, \dots, j-1\}}).
\end{align*}
\end{lemma}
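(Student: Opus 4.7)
The plan is to reduce the statement to a straightforward inclusion--exclusion on the zero-coordinate events over $T$. For each $j \in T$, define
\[
B_j = \{\mathbf c \in \mathcal C : \mathrm{wt}(\mathbf c) = i,\ c_j = 0\}.
\]
Then $W_i(\mathcal C, T)$ is precisely the set of weight-$i$ codewords of $\mathcal C$ lying outside $\bigcup_{j \in T} B_j$, so the inclusion--exclusion principle gives
\[
\#W_i(\mathcal C, T) = A_i(\mathcal C) - \sum_{\emptyset \neq S \subseteq T} (-1)^{|S|-1}\,\#\bigcap_{j \in S} B_j.
\]

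The next step is to identify each intersection with the shortened code. A codeword in $\bigcap_{j \in S} B_j$ is by definition a weight-$i$ codeword of $\mathcal C$ that vanishes on $S$, so puncturing it on $S$ (which removes only zero symbols) yields a weight-$i$ codeword of $\mathcal C_S$, and this correspondence is a weight-preserving bijection. Thus $\#\bigcap_{j \in S} B_j = A_i(\mathcal C_S)$. Since $|S| \leq |T| \leq t < d^{\perp}$, the hypothesis applies to every such $S$, so $A_i(\mathcal C_S)$ depends only on $|S|$; in particular $A_i(\mathcal C_S) = A_i(\mathcal C_{\{0,1,\ldots,|S|-1\}})$.

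Finally, grouping the subsets $S \subseteq T$ by cardinality (each cardinality $j$ contributes $\binom{\#T}{j}$ subsets) converts the inclusion--exclusion sum into
\[
\sum_{j=1}^{\#T} (-1)^{j-1}\binom{\#T}{j} A_i(\mathcal C_{\{0,1,\ldots,j-1\}}),
\]
which produces the stated formula and simultaneously shows that $\#W_i(\mathcal C,T)$ depends only on $\#T$. The argument is essentially bookkeeping; the only non-trivial point is the weight-preserving identification of $\bigcap_{j \in S} B_j$ with the weight-$i$ codewords of $\mathcal C_S$, together with verifying that the hypothesis is legitimately applicable to every subset $S \subseteq T$ (guaranteed by $|S| \le t < d^{\perp}$).
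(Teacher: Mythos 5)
Your proof is correct and follows essentially the same route as the paper: inclusion--exclusion over the events of vanishing at the coordinates of $T$, identification of each intersection with the weight-$i$ codewords of the corresponding shortened code (the paper phrases this via the subcode $\mathcal C(J)$ and the equality $A_i(\mathcal C(J)) = A_i(\mathcal C_J)$), and then grouping subsets by cardinality using the hypothesis. Your write-up just makes the weight-preserving bijection and the applicability of the hypothesis to every $S \subseteq T$ slightly more explicit than the paper does.
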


\begin{proof}
By the inclusion-exclusion principle, one has
\begin{align*}
\# W_i(\mathcal C,T)=A_i(\mathcal C)- \sum_{j=1}^{\# T} (-1)^{j-1} \sum_{J\subseteq T, \# J=j} A_i(\mathcal C(J)).
\end{align*}
By assumption, $A_i(\mathcal C(J))=A_i(\mathcal C_{J})= A_i(\mathcal C_{\{0,1, \dots, \#J-1\}})$. The desired conclusions then follow.
\end{proof}

For $T=\{t\}$ and
$T=\{t_1,t_2\}$, we determine the weight distribution of the punctured code $\mathcal C^T$
from some subcodes of  $\mathcal C \left (0,1, \dots, \lfloor \frac{m}{2} \rfloor \right )$.
\begin{theorem}\label{thm:pc-1}
Let $t$ be an integer  and  $m\ge 3$  odd,
where $0 \le t \le \frac{3^m-1}{2}-1$. Let $\mathcal C$ be a subcode of $\mathcal C \left (0,1, \dots, \lfloor \frac{m}{2} \rfloor \right )$
with the weight distribution in Table  \ref{table:3W}.
Then, the punctured  code $\mathcal C^{\{t\}}$ is a ternary linear code of length $\frac{3^m-1}{2}-1$ and dimension $2m$, and has the weight distribution in Table \ref{table:6W-1}.

\begin{table}[htbp]
\centering
\caption{The weight distribution of the punctured code  $\mathcal C^{\{t\}}$}
\label{table:6W-1}
\begin{tabular}{|c|c|}
  \hline
  Weight & Frequency \\
  \hline
  $0$& $1$\\
\hline
$3^{m-1}-3^{\frac{m-1}{2}}$ & $ \frac{1}{2} \cdot \left ( 3^{m-1} +2\cdot 3^{\frac{m-1}{2}}-1  \right ) \left ( 3^{m-1} +3^{\frac{m-1}{2}}\right )     $\\
\hline
$3^{m-1}-3^{\frac{m-1}{2}}-1$ & $ 3^{m-1}(3^{m-1}-1)    $\\
\hline
$3^{m-1}$&
 $\left ( 2\cdot 3^{m-1} + 1\right ) \left ( 3^{m-1} -1\right )  $\\
 \hline
 $3^{m-1}-1$&
 $2\cdot 3^{m-1} \left ( 2\cdot 3^{m-1} + 1\right )  $\\
 \hline
 $3^{m-1}+3^{\frac{m-1}{2}}$ & $ \frac{1}{2} \cdot \left ( 3^{m-1} -2\cdot 3^{\frac{m-1}{2}}-1  \right ) \left ( 3^{m-1} -3^{\frac{m-1}{2}}\right )      $\\
\hline
$3^{m-1}+3^{\frac{m-1}{2}}-1$ & $ 3^{m-1}(3^{m-1}-1)    $\\
\hline
\end{tabular}
\end{table}
\end{theorem}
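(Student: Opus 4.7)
The plan is to express the weight enumerator of the punctured code $\mathcal C^{\{t\}}$ in terms of the weight enumerator of $\mathcal C$ (Table \ref{table:3W}) and the weight enumerator of the shortened code $\mathcal C_{\{t\}}$ already determined in Theorem \ref{thm:sc1}. The starting observation is that for any codeword $\mathbf c = (c_0, \dots, c_{n-1}) \in \mathcal C$, the punctured codeword $\mathbf c^{\{t\}}$ obtained by deleting coordinate $t$ satisfies $\mathrm{wt}(\mathbf c^{\{t\}}) = \mathrm{wt}(\mathbf c)$ if $c_t = 0$, and $\mathrm{wt}(\mathbf c^{\{t\}}) = \mathrm{wt}(\mathbf c) - 1$ if $c_t \neq 0$.

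Since the minimum distance of $\mathcal C$ is $3^{m-1} - 3^{(m-1)/2} \geq 6$ for $m \geq 3$ odd, no nonzero codeword of $\mathcal C$ is mapped to the zero vector by puncturing. Consequently the puncturing map $\mathcal C \to \mathcal C^{\{t\}}$ is a bijection, so $\mathcal C^{\{t\}}$ has length $\frac{3^m-1}{2}-1$ and dimension $2m$. Furthermore, under this bijection the codewords with $c_t = 0$ correspond exactly to the codewords of the shortened code $\mathcal C_{\{t\}}$.

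Accordingly, for each of the three nonzero weights $w \in \{w_1,w_2,w_3\} := \{3^{m-1}-3^{(m-1)/2},\, 3^{m-1},\, 3^{m-1}+3^{(m-1)/2}\}$ occurring in $\mathcal C$, I would conclude
$$
A_w(\mathcal C^{\{t\}}) = A_w(\mathcal C_{\{t\}}), \qquad A_{w-1}(\mathcal C^{\{t\}}) = A_w(\mathcal C) - A_w(\mathcal C_{\{t\}}).
$$
Substituting the values from Tables \ref{table:3W} and \ref{table:3W-1} then yields the six entries of Table \ref{table:6W-1} by direct arithmetic; in particular,
$A_{w-1}(\mathcal C^{\{t\}})$ factors as $(3^{m-1}\pm 3^{(m-1)/2})(3^{m-1}\mp 3^{(m-1)/2}) = 3^{m-1}(3^{m-1}-1)$ for $w \in \{w_1,w_3\}$, and $A_{w_2-1}(\mathcal C^{\{t\}}) = 2\cdot 3^{m-1}(2\cdot 3^{m-1}+1)$ by a similar cancellation.

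The only point that requires a moment of care is that the six weights $\{w,w-1 : w\in\{w_1,w_2,w_3\}\}$ are pairwise distinct, so the frequency tally suffers no collision. This is immediate because the gaps $w_2-w_1 = w_3-w_2 = 3^{(m-1)/2} \geq 3$ for $m\geq 3$ odd. No Pless power moments are needed in this proof; the main (and only) obstacle is the bookkeeping of the six frequencies, which is entirely routine.
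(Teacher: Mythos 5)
Your proposal is correct and follows essentially the same route as the paper: the paper likewise obtains the dimension from the standard puncturing results in Huffman--Pless, sets $A_i(\mathcal C^{\{t\}})=A_i(\mathcal C_{\{t\}})$ and $A_{i-1}(\mathcal C^{\{t\}})=A_i(\mathcal C)-A_i(\mathcal C_{\{t\}})$ for the three nonzero weights of $\mathcal C$, and then substitutes the values from Theorem \ref{thm:sc1}. Your explicit remarks on injectivity of the puncturing map and on the six weights being pairwise distinct are details the paper leaves implicit, but the argument is the same.
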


\begin{proof}
It follows from Lemma \ref{eq:dual-abc} that $d(\mathcal C^\perp)=4$.
According to \cite{HP10}, $\mathcal C^{\{t\}}$ has length $n=\frac{3^m-1}{2}-1$ and dimension $k=2m$.
For $i\in \{3^{m-1}-3^{\frac{m-1}{2}}, 3^{m-1},  3^{m-1}+3^{\frac{m-1}{2}} \}$, by the definition of $\mathcal C^{\{t\}}$, we have
\begin{align*}
A_i(\mathcal C^{\{t\}})=A_i(\mathcal C_{\{t\}})
\end{align*}
and
\begin{align*}
A_{i-1}(\mathcal C^{\{t\}})=A_i(\mathcal C)-A_i(\mathcal C_{\{t\}}).
\end{align*}
The desired conclusions then follow from Theorem \ref{thm:sc1}.
\end{proof}

\begin{example}
Let $m=5$, $E=\{0,1\}$, $0\le t \le \frac{3^m-1}{2}-1$ and $\mathcal C=\mathcal {C}(E)$. Then
 the punctured  code $\mathcal C^{\{t\}}$ has parameters $[120,10,71]$ and weight enumerator $1+
6480z^{71}+ 4410 z^{72}+
 26406 z^{80}+
13040 z^{81}+6480 z^{89}+2232 z^{90}$.
This code has the same parameters as the best ternary linear code known in the database maintained  by Markus Grassl.

All the punctured codes $\mathcal C^{\{t\}}$ are equivalent,
as the automorphism group of the code $\mathcal{C}$ is transitive.
\end{example}

\begin{theorem}\label{thm:pc-2}
Let $t_1$ and $t_2$ be two integers  and  $m \ge 3$ be odd,
where $0 \le t_1 < t_2 \le \frac{3^m-1}{2}-1$. Let $\mathcal C$ be a  subcode of $\mathcal C \left (0,1, \dots, \lfloor \frac{m}{2} \rfloor \right )$
with the weight distribution in Table  \ref{table:3W}.
Then, the punctured code $\mathcal C^{\{t_1, t_2\}}$ is a ternary linear code of length $\frac{3^m-1}{2}-2$ and dimension $2m$, and has the weight distribution in Table \ref{table:9W-2}.

\begin{table}[htbp]
\centering
\caption{The weight distribution of the punctured code $\mathcal C^{\{t_1, t_2\}}$}
\label{table:9W-2}
\begin{tabular}{|c|c|}
  \hline
  Weight & Frequency \\
  \hline
  $0$& $1$\\
\hline
$3^{m-1}-3^{\frac{m-1}{2}}$ & $ \frac{1}{6} \cdot \left ( 3^{m-1} +2\cdot 3^{\frac{m-1}{2}}-1  \right ) \left ( 3^{m-1} +3^{\frac{m+1}{2}}\right )     $\\
\hline
$3^{m-1}-3^{\frac{m-1}{2}}-1$ & $ 2 \cdot  3^{m-2}  \left ( 3^{m-1} +2\cdot 3^{\frac{m-1}{2}}-1  \right )      $\\
\hline
$3^{m-1}-3^{\frac{m-1}{2}}-2$ & $ 2 \cdot  3^{m-2} \left ( 3^{m-1} - 3^{\frac{m-1}{2}}-1  \right )    $\\
\hline
$3^{m-1}$&
 $\left ( 2\cdot 3^{m-1} + 1\right ) \left ( 3^{m-2} -1\right )  $\\
 \hline
 $3^{m-1}-1$&
 $4 \cdot 3^{m-2} \left ( 2\cdot 3^{m-1} + 1\right )  $\\
 \hline
 $3^{m-1}-2$&
 $4 \cdot 3^{m-2} \left ( 2\cdot 3^{m-1} + 1\right )  $\\
 \hline
 $3^{m-1}+3^{\frac{m-1}{2}}$ & $ \frac{1}{6} \cdot \left ( 3^{m-1} -2\cdot 3^{\frac{m-1}{2}}-1  \right ) \left ( 3^{m-1} -3^{\frac{m+1}{2}}\right )      $\\
\hline
$3^{m-1}+3^{\frac{m-1}{2}}-1$ & $  2 \cdot  3^{m-2}  \left ( 3^{m-1} - 2\cdot 3^{\frac{m-1}{2}}-1  \right )       $\\
\hline
$3^{m-1}+3^{\frac{m-1}{2}}-2$ & $  2 \cdot  3^{m-2} \left ( 3^{m-1} + 3^{\frac{m-1}{2}}-1  \right )       $\\
\hline
\end{tabular}
\end{table}
\end{theorem}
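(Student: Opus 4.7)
The plan is to proceed in direct analogy with the proof of Theorem \ref{thm:pc-1}, but now sorting codewords of $\mathcal C$ into three classes (rather than two) according to how many of their coordinates at $\{t_1,t_2\}$ are nonzero. First I would record that, since $d(\mathcal C^\perp)=4$ by Lemma \ref{eq:dual-abc} and we are puncturing at only $2<d(\mathcal C^\perp)$ positions, the punctured code has length $\frac{3^m-1}{2}-2$ and dimension $2m$ (the projection on the complement of $\{t_1,t_2\}$ remains injective on $\mathcal C$). This disposes of the length and dimension claims.

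For the weight enumerator, the key observation is: for $\mathbf c\in\mathcal C$ of weight $w$, the punctured codeword has weight $w-s(\mathbf c)$, where $s(\mathbf c)=\#\{i\in\{t_1,t_2\}:c_i\neq 0\}\in\{0,1,2\}$. So for each of the three possible weights $w_1=3^{m-1}-3^{(m-1)/2}$, $w_2=3^{m-1}$, $w_3=3^{m-1}+3^{(m-1)/2}$ in $\mathcal C$, I split the $A_w(\mathcal C)$ codewords (given in Table \ref{table:3W}) into three counts $N_0(w),N_1(w),N_2(w)$ corresponding to $s=0,1,2$. By inclusion--exclusion one has
\begin{align*}
N_0(w)&=A_w(\mathcal C_{\{t_1,t_2\}}),\\
N_2(w)&=A_w(\mathcal C)-2A_w(\mathcal C_{\{t_1\}})+A_w(\mathcal C_{\{t_1,t_2\}}),\\
N_1(w)&=2\bigl(A_w(\mathcal C_{\{t_1\}})-A_w(\mathcal C_{\{t_1,t_2\}})\bigr),
\end{align*}
which is exactly the content of Lemma \ref{lem:W-C-T} once we invoke Theorems \ref{thm:sc1} and \ref{thm:sc2} to guarantee that these shortened weight distributions depend only on $|T|$. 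Substituting the explicit values from Table \ref{table:3W}, Table \ref{table:3W-1} and Table \ref{table:3W-2} into the three formulas above yields closed expressions for each of the nine counts $N_s(w_j)$, $s\in\{0,1,2\}$, $j\in\{1,2,3\}$.

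Finally I would verify that the nine resulting weights $w_j-s$ are pairwise distinct, so that no merging occurs: the three weights $w_1,w_2,w_3$ of $\mathcal C$ are spaced by $3^{(m-1)/2}\ge 3$ (because $m\ge 3$ is odd), while the shifts $s$ only span $\{0,1,2\}$. Hence the frequency attached to weight $w_j-s$ in $\mathcal C^{\{t_1,t_2\}}$ is exactly $N_s(w_j)$, and matching these against Table \ref{table:9W-2} reduces to routine simplification (each entry factors through $3^{m-2}$ in the $s=1,2$ rows, as already seen in Theorem \ref{thm:pc-1}). I expect no genuine obstacle: the only slightly delicate point is to confirm, via Theorem \ref{thm:sc2}, that $A_w(\mathcal C_{\{t_1,t_2\}})$ really is independent of the particular pair $\{t_1,t_2\}$ chosen, which is precisely what was established in the shortened-code analysis; the rest is algebraic bookkeeping.
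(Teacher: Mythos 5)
Your proposal is correct and follows essentially the same route as the paper: decompose the codewords of each weight $w$ of $\mathcal C$ by the number $s\in\{0,1,2\}$ of nonzero coordinates on $\{t_1,t_2\}$, compute $N_0,N_1,N_2$ via inclusion--exclusion (Lemma \ref{lem:W-C-T}) together with the shortened-code distributions of Theorems \ref{thm:sc1} and \ref{thm:sc2}, and note that the shifted weights $w-s$ do not collide. The only nit is that the dimension claim rests on $|T|=2<d(\mathcal C)$, which is what guarantees the injectivity you invoke, rather than on $2<d(\mathcal C^{\perp})$; both inequalities hold here, so nothing breaks.
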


\begin{proof}
By Lemma \ref{eq:dual-abc}, $d(\mathcal C^\perp)=4$. According to \cite{HP10},
$\mathcal C^{\{t_1, t_2\}}$ has length $n=\frac{3^m-1}{2}-2$ and dimension $k=2m$.
For $i \in \{3^{m-1}-3^{\frac{m-1}{2}}, 3^{m-1},  3^{m-1}+3^{\frac{m-1}{2}} \}$, using the definition of $\mathcal C^{\{t_1, t_2\}}$, we deduce that
\begin{align*}
\left\{
  \begin{array}{l}
    A_i(\mathcal C^{\{t_1,t_2\}})=A_i(\mathcal C_{\{t_1,t_2\}}),  \\
   A_{i-1}(\mathcal C^{\{t_1,t_2\}})=A_i(\mathcal C)-A_i(\mathcal C_{\{t_1,t_2\}})-\# W_i(\mathcal C, \{t_1,t_2\}) , \\
   A_{i-2}(\mathcal C^{\{t_1,t_2\}})=\# W_i(\mathcal C, \{t_1,t_2\}).
  \end{array}
\right.
\end{align*}
It then follows from Lemma \ref{lem:W-C-T} that
\begin{align*}
\# W_i(\mathcal C, \{t_1,t_2\})=A_i(\mathcal C)- 2 A_i(\mathcal C_{\{0\}}) +A_i(\mathcal C_{\{0,1\}}).
\end{align*}
The desired conclusions then follow from Theorems \ref{thm:sc1} and \ref{thm:sc2}.
\end{proof}

\begin{example}
Let $m=5$, $E=\{0,1\}$, $0\le t_1< t_2 \le \frac{3^m-1}{2}-1$ and $\mathcal C=\mathcal {C}(E)$. Then
the punctured  code $\mathcal C^{\{t_1, t_2\}}$ has parameters $[119,10,70]$ and weight enumerator $1+ 3834z^{70}+
5292z^{71}+1764z^{72}+
17604z^{79}+
 17604z^{80}+
4238 z^{81}+
4806 z^{88}+3348 z^{89}+558z^{90}$.
This code has the same parameters as the best ternary linear code known in the database maintained  by Markus Grassl.

Our Magma experiments showed that for many different pairs of $\{t_1, t_2\}$ and
$\{t'_1, t'_2\}$, the punctured codes $\mathcal C^{\{t_1, t_2\}}$ and
$\mathcal C^{\{t'_1, t'_2\}}$
are not equivalent. Hence, the automorphism group of the
code $\mathcal{C}$ is in general not $2$-homogeneous and $2$-transitive.
\end{example}

\section{Steiner systems and $2$-designs from  $\mathcal{C}(E)$}

In this section, we confirm Conjectures 1 and 2, and Conjecture 3 for $k \in \{4,5,6,7\}$.
In addition, we construct more $2$-designs from subcodes of $\mathcal C \left (0,1, \dots, \lfloor \frac{m}{2} \rfloor \right )$.

Let $\mathcal C$ be an $[n,k,d]$ linear code. Define
\begin{align*}
W_i(\mathcal C)=\{\mathbf c\in \mathcal C: \mathrm{wt}(\mathbf c)=i\}, ~~~~0\le i \le n.
\end{align*}

\begin{theorem}\label{thm:Steiner-A4}
Let $m\ge 3$ be a positive integer and $\mathcal C$ a subcode of $\mathcal C \left (0,1, \dots, \lfloor \frac{m}{2} \rfloor \right )$
 such that $A_4(\mathcal C^{\perp})=\frac{(3^m-1)(3^{m-1}-1)}{8}.$
Then $(\mathcal P(\mathcal C^{\perp}), \mathcal B_{4}(\mathcal C^{\perp}) )$
is the Steiner system $\mathbb{PD}(1,m-1,3)$ with parameters $S(2,4,\frac{3^m-1}{2})$.
\end{theorem}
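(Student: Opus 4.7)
The plan is to exploit the containment $\mathcal{C} \subseteq \mathcal{C}\left(0,1,\dots,\lfloor m/2\rfloor\right)$ together with a counting comparison on weight-$4$ codewords, reducing everything to the structural information about $\mathcal{C}\left(0,1,\dots,\lfloor m/2\rfloor\right)^{\perp}$ already collected in Theorem~\ref{thm:C(0,1,...,m/2)}.

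First I would dualize the inclusion: since $\mathcal{C} \subseteq \mathcal{C}\left(0,1,\dots,\lfloor m/2\rfloor\right)$, we obtain
\[
\mathcal{C}\left(0,1,\dots,\lfloor m/2\rfloor\right)^{\perp} \;\subseteq\; \mathcal{C}^{\perp},
\]
and in particular every weight-$4$ codeword of $\mathcal{C}\left(0,1,\dots,\lfloor m/2\rfloor\right)^{\perp}$ lies in $\mathcal{C}^{\perp}$. By Part~(iv) of Theorem~\ref{thm:C(0,1,...,m/2)}, the number of such weight-$4$ codewords equals $\frac{(3^m-1)(3^{m-1}-1)}{8}$, which by assumption matches $A_4(\mathcal{C}^{\perp})$. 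Hence
\[
W_4\!\left(\mathcal{C}^{\perp}\right) \;=\; W_4\!\left(\mathcal{C}\left(0,1,\dots,\lfloor m/2\rfloor\right)^{\perp}\right).
\]

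Next I would invoke Part~(iii) of Theorem~\ref{thm:C(0,1,...,m/2)}, which identifies the weight-$4$ codewords of $\mathcal{C}\left(0,1,\dots,\lfloor m/2\rfloor\right)^{\perp}$ as the $\mathrm{GF}(3)^{\times}$-multiples of the characteristic vectors of the blocks of $\mathbb{PD}(1,m-1,3)$. Taking supports, a codeword $\mathbf{c}$ and its negative $2\mathbf{c}$ yield the same block, so the supports of $W_4(\mathcal{C}^{\perp})$ are exactly the blocks of $\mathbb{PD}(1,m-1,3)$; this gives
\[
\mathcal{B}_{4}(\mathcal{C}^{\perp}) \;=\; \mathcal{B}\left(\mathbb{PD}(1,m-1,3)\right).
\]
Since $\mathcal{P}(\mathcal{C}^{\perp})$ and the point set of $\mathbb{PD}(1,m-1,3)$ both have $\frac{3^m-1}{2}$ elements and are canonically identified via the correspondence $i \leftrightarrow \rho(\alpha^i)$, the incidence structure $(\mathcal{P}(\mathcal{C}^{\perp}),\mathcal{B}_{4}(\mathcal{C}^{\perp}))$ is literally $\mathbb{PD}(1,m-1,3)$, which is the Steiner system $S(2,4,\frac{3^m-1}{2})$.

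There is essentially no hard step: the argument is a containment-plus-equality-of-cardinalities trick, and the real work has already been done in establishing Theorem~\ref{thm:C(0,1,...,m/2)}. The only point to watch is the over-$\mathrm{GF}(3)$ passage from codewords to supports, where one must recognize that the two scalar multiples $\mathbf{c}$ and $2\mathbf{c}$ of a characteristic vector collapse to the same block, so that the counted $\frac{(3^m-1)(3^{m-1}-1)}{8}$ weight-$4$ codewords correspond to exactly the $\frac{(3^m-1)(3^{m-1}-1)}{16}$ blocks of $S(2,4,\frac{3^m-1}{2})$; this is consistent with $\lambda=1$ and closes the proof.
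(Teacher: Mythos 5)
Your proposal is correct and follows essentially the same route as the paper: dualize the inclusion to get $\mathcal C\left(0,1,\dots,\lfloor m/2\rfloor\right)^{\perp}\subseteq \mathcal C^{\perp}$, use the hypothesis on $A_4(\mathcal C^{\perp})$ together with Part~(iv) of Theorem~\ref{thm:C(0,1,...,m/2)} to force $W_4(\mathcal C^{\perp})=W_4\left(\mathcal C\left(0,1,\dots,\lfloor m/2\rfloor\right)^{\perp}\right)$, and then conclude via Part~(iii). Your additional remark that the two scalar multiples $\mathbf{c}$ and $2\mathbf{c}$ collapse to the same block, reconciling the codeword count $\frac{(3^m-1)(3^{m-1}-1)}{8}$ with the block count $\frac{(3^m-1)(3^{m-1}-1)}{16}$, is a correct and welcome detail that the paper leaves implicit.
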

\begin{proof}
Note that  $\mathcal C \left (0,1, \dots, \lfloor \frac{m}{2} \rfloor \right )^{\perp} \subseteq \mathcal C^{\perp}$.
Thus, $ W_{4}\left (\mathcal C \left (0,1, \dots, \lfloor \frac{m}{2} \rfloor \right )^{\perp} \right ) \subseteq  W_4 \left ( \mathcal C^{\perp} \right )$.
From Part  (iv) of Theorem  \ref{thm:C(0,1,...,m/2)} and $A_4(\mathcal C^{\perp})=\frac{(3^m-1)(3^{m-1}-1)}{8}$, we have $W_{4}\left (\mathcal C \left (0,1, \dots, \lfloor \frac{m}{2} \rfloor \right )^{\perp} \right ) =  W_4 \left ( \mathcal C^{\perp} \right )$ and $ \mathcal B_4 \left ( \mathcal C^{\perp} \right )=\mathcal B_{4}\left (\mathcal C \left (0,1, \dots, \lfloor \frac{m}{2} \rfloor \right )^{\perp} \right ) $. The conclusions of this theorem finally follow from
Part (iii) of Theorem \ref{thm:C(0,1,...,m/2)}.
\end{proof}

\begin{corollary}\label{cor:steiner}
Let   $m\ge 3$ be an odd integer and $\mathcal C$ a subcode of $\mathcal C \left (0,1, \dots, \lfloor \frac{m}{2} \rfloor \right )$
with the weight distribution in Table  \ref{table:3W}. If $\mathcal C'$ is a linear  code such that
$\mathcal C \subseteq \mathcal C' \subseteq \mathcal C \left (0,1, \dots, \lfloor \frac{m}{2} \rfloor \right )$, then $(\mathcal P(\mathcal C'^{\perp}), \mathcal B_{4}(\mathcal C'^{\perp}) )$
is the Steiner system $\mathbb{PD}(1,m-1,3)$ with parameters $S(2,4,\frac{3^m-1}{2})$.
\end{corollary}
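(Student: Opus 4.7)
The plan is to reduce the statement to Theorem \ref{thm:Steiner-A4}, which already packages the desired conclusion: it says that whenever a subcode of $\mathcal C(0,1,\dots,\lfloor m/2\rfloor)$ has dual with $A_4^{\perp} = (3^m-1)(3^{m-1}-1)/8$, the supports of its weight-$4$ dual codewords form exactly the Steiner system $\mathbb{PD}(1,m-1,3)$. Since $\mathcal C'$ is by hypothesis a subcode of $\mathcal C(0,1,\dots,\lfloor m/2\rfloor)$, the entire task collapses to verifying the single numerical identity $A_4(\mathcal C'^{\perp}) = (3^m-1)(3^{m-1}-1)/8$.

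To establish this identity I would exploit a sandwich argument coming from duality. Taking duals of the chain $\mathcal C \subseteq \mathcal C' \subseteq \mathcal C(0,1,\dots,\lfloor m/2\rfloor)$ reverses the inclusions to
$$\mathcal C(0,1,\dots,\lfloor m/2\rfloor)^{\perp} \subseteq \mathcal C'^{\perp} \subseteq \mathcal C^{\perp},$$
and restricting to codewords of Hamming weight $4$ gives
$$W_4\bigl(\mathcal C(0,1,\dots,\lfloor m/2\rfloor)^{\perp}\bigr) \subseteq W_4(\mathcal C'^{\perp}) \subseteq W_4(\mathcal C^{\perp}).$$
Now the two outer cardinalities are both equal to $(3^m-1)(3^{m-1}-1)/8$: the left one by Part (iv) of Theorem \ref{thm:C(0,1,...,m/2)}, and the right one by Lemma \ref{eq:dual-abc}, whose hypothesis that $\mathcal C$ has the weight distribution of Table \ref{table:3W} is precisely what we are given. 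Since the sandwich has matching outer ends, the middle set $W_4(\mathcal C'^{\perp})$ must also have this cardinality, which is exactly $A_4(\mathcal C'^{\perp}) = (3^m-1)(3^{m-1}-1)/8$.

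With this identity in hand, Theorem \ref{thm:Steiner-A4} applies verbatim to $\mathcal C'$ and yields that $(\mathcal P(\mathcal C'^{\perp}), \mathcal B_4(\mathcal C'^{\perp}))$ is the Steiner system $\mathbb{PD}(1,m-1,3) = S(2,4,(3^m-1)/2)$, as required. There is no genuine obstacle in this corollary: the nontrivial content has already been absorbed into Theorem \ref{thm:C(0,1,...,m/2)}(iv) (the projective-geometry identification of minimum-weight supports) and Lemma \ref{eq:dual-abc} (the Pless-moment calculation of $A_4^{\perp}$ from the three-weight distribution). What makes the argument work is the coincidence that these two a priori different counts agree, which forces the intermediate code $\mathcal C'$ to inherit the same minimum-weight support set, and hence the same $S(2,4,(3^m-1)/2)$ design.
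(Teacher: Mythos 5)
Your proposal is correct and follows essentially the same route as the paper: the same sandwich of weight-$4$ codeword sets obtained by dualizing the chain $\mathcal C \subseteq \mathcal C' \subseteq \mathcal C(0,1,\dots,\lfloor m/2\rfloor)$, with the outer counts matched via Theorem \ref{thm:C(0,1,...,m/2)}(iv) and Lemma \ref{eq:dual-abc}, followed by an appeal to Theorem \ref{thm:Steiner-A4}. No gaps.
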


\begin{proof}
Note that $W_{4}\left (\mathcal C \left (0,1, \dots, \lfloor \frac{m}{2} \rfloor \right )^{\perp} \right )\subseteq  W_4 \left ( \mathcal C'^{\perp} \right ) \subseteq  W_4 \left ( \mathcal C^{\perp} \right )$.
From Part (iv) of Theorem \ref{thm:C(0,1,...,m/2)} and Lemma \ref{eq:dual-abc}, we then deduce that  $A_4(\mathcal C'^{\perp})=\frac{(3^m-1)(3^{m-1}-1)}{8}$. From Theorem \ref{thm:Steiner-A4}, this corollary follows.
\end{proof}


\begin{remark}
Corollary \ref{cor:steiner} confirmed Conjecture \ref{conj:Steiner}, and proved that the
Steiner system $\mathbb{PD}(1,m-1,3)$ is supported by many ternary linear codes. It is an
interesting problem to find a linear code that supports a given design. This is in general
a hard problem. In addition, Corollary \ref{cor:steiner} says that the duals of many subcodes of $\mathcal C \left (0,1, \dots, \lfloor \frac{m}{2} \rfloor \right )$ do not support a new Steiner system, but the geometric Steiner system $\mathbb{PD}(1,m-1,3)$.
\end{remark}

Let $\mathbb{D}$ be a $t$-$(v, k, \lambda)$ design. For a majority decoding of the code
$\mathbf{C}_q(\mathbb{D})^\perp$, Tonchev introduced the \emph{dimension} of $\mathbb{D}$ over
GF($q$), which is defined to be minimum dimension of all linear codes of length $v$
over GF($q$) that contain the blocks of $\mathbb{D}$ as the supports of codewords of
weight $k$ \cite{Ton99}. When $q=2$, the dimension of $\mathbb{D}$ over GF($q$) is the
same as the rank of $\mathbb{D}$ over GF($q$). When $q>2$, the rank of $\mathbb{D}$
over GF($q$) is an upper bound of the dimension of $\mathbb{D}$ over GF($q$).
Since $\mathcal C \left (0,1, \dots, \lfloor \frac{m}{2} \rfloor \right )$ has dimension
$m(m+1)/2$, its dual has dimension $(3^m-1-m(m+1))/2$. As a result, the dimension of
$\mathbb{PD}(1,m-1,3)$ over GF($3$) is upper bounded by $(3^m-1-m(m+1))/2$. Then the
following open problem arises.

\begin{open}
Is the dimension of the Steiner system $\mathbb{PD}(1,m-1,3)$ over GF($3$) equal to
$(3^m-1-m(m+1))/2$.
\end{open}


We will need the following  lemma from \cite[p. 121]{Dingbk18}  to determine the parameters of some $t$-designs.

\begin{lemma}\label{lem:(q-1)d}
Let $\mathcal C$ be a linear code over $\mathrm{GF}(3)$ with minimum weight $d$.
Let $\mathbf c$ and  $\mathbf c'$ be two codewords of weight $i$  and  $\mathrm{Supp}(\mathbf c)=\mathrm{Supp}(\mathbf c')$,
where $d\le i \le 2d-1$.
Then $\mathbf c'= \mathbf c$ or $\mathbf c'=- \mathbf c$.
\end{lemma}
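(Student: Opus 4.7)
The plan is to argue by looking simultaneously at the sum $\mathbf{c} + \mathbf{c}'$ and the difference $\mathbf{c} - \mathbf{c}'$, exploiting the fact that over $\mathrm{GF}(3)$ each nonzero coordinate of a codeword is either $1$ or $-1$.

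First I would reduce to the common support. Since $\mathrm{Supp}(\mathbf{c}) = \mathrm{Supp}(\mathbf{c}')$, at every coordinate $j$ outside this set both $c_j$ and $c'_j$ vanish, so both $\mathbf{c}+\mathbf{c}'$ and $\mathbf{c}-\mathbf{c}'$ are automatically zero there. For coordinates $j \in \mathrm{Supp}(\mathbf{c})$, we have $c_j, c'_j \in \{1,-1\}$. Partition this common support as
\begin{equation*}
S_{+} = \{j \in \mathrm{Supp}(\mathbf{c}) : c'_j = c_j\}, \qquad S_{-} = \{j \in \mathrm{Supp}(\mathbf{c}) : c'_j = -c_j\}.
\end{equation*}
A direct check in $\mathrm{GF}(3)$ (using that $1+1 = -1$ and $-1-1 = 1$) shows that $\mathbf{c}-\mathbf{c}'$ is nonzero exactly on $S_{-}$ and $\mathbf{c}+\mathbf{c}'$ is nonzero exactly on $S_{+}$. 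Hence
\begin{equation*}
\mathrm{wt}(\mathbf{c}-\mathbf{c}') = |S_{-}|, \qquad \mathrm{wt}(\mathbf{c}+\mathbf{c}') = |S_{+}|, \qquad |S_{+}| + |S_{-}| = i.
\end{equation*}

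Next I would invoke the hypothesis $i \le 2d-1$. This forces $\min(|S_{+}|,|S_{-}|) \le \lfloor (2d-1)/2 \rfloor = d-1$. Since $\mathbf{c}-\mathbf{c}'$ and $\mathbf{c}+\mathbf{c}'$ both lie in $\mathcal C$, any nonzero one among them would have weight at least $d$. Therefore the one of smaller weight must be the zero codeword: either $\mathbf{c}-\mathbf{c}'=\mathbf 0$ (giving $\mathbf{c}'=\mathbf{c}$) or $\mathbf{c}+\mathbf{c}'=\mathbf 0$ (giving $\mathbf{c}'=-\mathbf{c}$).

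There is no real obstacle here; the statement is essentially a pigeonhole observation tailored to the two nonzero scalars of $\mathrm{GF}(3)$. The only thing to be careful about is the coordinate-wise case analysis establishing $\mathrm{wt}(\mathbf{c}\pm\mathbf{c}') = |S_{\mp}|$, which relies crucially on characteristic $3$; the same argument would fail over $\mathrm{GF}(q)$ for larger $q$ because two nonzero entries could sum to another nonzero entry without being negatives, which is precisely why the lemma is stated only for ternary codes.
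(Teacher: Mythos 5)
Your proof is correct. Note that the paper does not actually prove this lemma --- it is quoted from \cite[p.\ 121]{Dingbk18} with no argument given --- so there is nothing to compare against; your pigeonhole argument (split the common support into the set where $c'_j=c_j$ and the set where $c'_j=-c_j$, observe that $\mathbf c-\mathbf c'$ and $\mathbf c+\mathbf c'$ are supported exactly on these two sets, and use $|S_+|+|S_-|=i\le 2d-1$ to force one of them below the minimum weight) is the standard one and is complete as written.
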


\begin{theorem}\label{thm:3-2-designs}
Let   $m\ge 5$ be an odd integer and $\mathcal C$ a subcode of $\mathcal C \left (0,1, \dots, \lfloor \frac{m}{2} \rfloor \right )$
with the weight distribution in Table  \ref{table:3W}. Let $k\in \left \{3^{m-1}-3^{\frac{m-1}{2}}, 3^{m-1},3 ^{m-1}+3^{\frac{m-1}{2}} \right \}$. Then $\left (\mathcal P(\mathcal C), \mathcal B_{k}(\mathcal C) \right )$
is a $2$-$(\frac{3^m-1}{2}, k, \lambda)$ design, where
\begin{align*}
\lambda= \frac{A_k(\mathcal C)+A_k \left ( \mathcal C_{\{0,1\}} \right )}{2}- A_k\left (\mathcal C_{\{0\}} \right ).
\end{align*}
\end{theorem}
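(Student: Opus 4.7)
The plan is to prove this by counting pairs-of-points incidences in two ways, using the shortened-code machinery developed in Section 4 together with Lemma \ref{lem:(q-1)d} to pass between codewords and their supports.

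First, I would verify that for $m \ge 5$, every weight $k$ appearing in the theorem lies in the range $d \le k \le 2d-1$, where $d = 3^{m-1}-3^{(m-1)/2}$ is the minimum weight of $\mathcal{C}$ (this is the only place the assumption $m \ge 5$ is used, as it fails precisely at $m=3$ when $k=3^{m-1}+3^{(m-1)/2}$). Once in this range, Lemma \ref{lem:(q-1)d} guarantees that the only codewords of $\mathcal{C}$ sharing a given support of weight $k$ are $\mathbf{c}$ and $-\mathbf{c}$. Since $\mathbf{c} \ne -\mathbf{c}$ over $\mathrm{GF}(3)$ when $\mathbf{c} \ne 0$, this yields a clean two-to-one correspondence between weight-$k$ codewords and elements of $\mathcal{B}_k(\mathcal{C})$; in particular $|\mathcal{B}_k(\mathcal{C})| = A_k(\mathcal{C})/2$.

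Next, to count blocks through a fixed pair of points $T=\{t_1,t_2\} \subseteq \mathcal{P}(\mathcal{C})$, I would apply Lemma \ref{lem:W-C-T} to compute
\begin{align*}
\#W_k(\mathcal{C},T) = A_k(\mathcal{C}) - 2\,A_k(\mathcal{C}_{\{0\}}) + A_k(\mathcal{C}_{\{0,1\}}).
\end{align*}
The hypotheses of Lemma \ref{lem:W-C-T} hold here because $d(\mathcal{C}^\perp)=4 > 2 \ge \#T$ by Lemma \ref{eq:dual-abc}, and because the weight distributions of the shortened codes $\mathcal{C}_{\{t\}}$ and $\mathcal{C}_{\{t_1,t_2\}}$ depend only on $|T|$, not on the particular positions, by Theorems \ref{thm:sc1} and \ref{thm:sc2}. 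Thus the right-hand side is an integer independent of the chosen pair $T$.

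Finally, I would combine these two observations: a block $B \in \mathcal{B}_k(\mathcal{C})$ contains $T$ if and only if $T \subseteq \mathrm{Supp}(\mathbf{c})$ for some weight-$k$ codeword $\mathbf{c}$; by the two-to-one correspondence, the number of such blocks equals $\#W_k(\mathcal{C},T)/2$. This common value
\begin{align*}
\lambda \;=\; \frac{A_k(\mathcal{C}) + A_k(\mathcal{C}_{\{0,1\}})}{2} - A_k(\mathcal{C}_{\{0\}})
\end{align*}
is therefore the replication number of every 2-subset of $\mathcal{P}(\mathcal{C})$, which is exactly the defining property of a 2-$(\tfrac{3^m-1}{2},k,\lambda)$ design. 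The only subtle step is the verification that the weight range satisfies $k \le 2d-1$ so that Lemma \ref{lem:(q-1)d} applies uniformly across all three nonzero weights; everything else follows by assembling results already proved in Sections 3 and 4.
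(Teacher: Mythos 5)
Your proposal is correct and follows essentially the same route as the paper: both use Lemma \ref{lem:(q-1)d} (valid since $m\ge 5$ forces $k\le 2d-1$) to get the two-to-one correspondence between weight-$k$ codewords and their supports, and then Lemma \ref{lem:W-C-T} together with Theorems \ref{thm:sc1} and \ref{thm:sc2} to show that $\# W_k(\mathcal C,\{i,j\})$ is independent of the pair, yielding the stated $\lambda$. Your explicit remark on exactly where $m\ge 5$ is needed is a nice clarification that the paper leaves implicit.
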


\begin{proof}
Let $k\in \left \{3^{m-1}-3^{\frac{m-1}{2}}, 3^{m-1},3 ^{m-1}+3^{\frac{m-1}{2}} \right \}$ and $0\le i <j \le \frac{3^m-1}{2}-1$.
Define
\begin{align*}
\mathcal B_k \left ( \mathcal C, \{i,j\} \right )= \left \{ \mathrm{Supp}(\mathbf c):  \mathbf c \in W_k\left (\mathcal C, \{i,j\} \right )\right  \},
\end{align*}
where $W_k\left (\mathcal C, \{i,j\} \right )$ was defined in Lemma \ref{lem:W-C-T}.
Since $m\ge 5$, we have $k\le 2\cdot \left (3^{m-1}-3^{\frac{m-1}{2}}\right )-1$. From Lemma \ref{lem:(q-1)d}, we get
\begin{align*}
\# \mathcal B_k \left ( \mathcal C, \{i,j\} \right )= \frac{1}{2} \# W_k\left (\mathcal C, \{i,j\} \right ).
\end{align*}
Using Theorems \ref{thm:sc1} , \ref{thm:sc2} and  Lemma \ref{lem:W-C-T}, we obtain
\begin{align*}
\# \mathcal B_k \left ( \mathcal C, \{i,j\} \right )= \frac{A_k(\mathcal C)+A_k \left ( \mathcal C_{\{0,1\}} \right )}{2}- A_k\left (\mathcal C_{\{0\}} \right ).
\end{align*}
Therefore, $\# \mathcal B_k \left ( \mathcal C, \{i,j\} \right )$ is independent of $i$ and $j$. Consequently, the codewords of weight $k$ hold a $2$-design. This completes the proof.
\end{proof}

\begin{corollary}
Let   $m\ge 5$ be an odd integer and $\mathcal C$ a subcode of $\mathcal C \left (0,1, \dots, \lfloor \frac{m}{2} \rfloor \right )$
with the weight distribution in Table  \ref{table:3W}. Then $\mathcal C$ holds three $2$-$(\frac{3^m-1}{2}, k, \lambda)$ designs with the following
pairs $(k,\lambda)$:
\begin{itemize}
\item $\left (3^{m-1}-3^{\frac{m-1}{2}},  3^{m-2} \left ( 3^{m-1}-3^{\frac{m-1}{2}} -1 \right )\right )$.
\item $\left (3^{m-1}, 2 \cdot 3^{m-2} \left ( 2\cdot 3^{m-1} +1 \right )\right )$.
\item $\left (3^{m-1}+3^{\frac{m-1}{2}},  3^{m-2} \left ( 3^{m-1}+3^{\frac{m-1}{2}} -1 \right )\right )$.
\end{itemize}

\end{corollary}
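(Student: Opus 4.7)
The existence of three $2$-designs is already delivered by Theorem \ref{thm:3-2-designs}: for each $k$ in the stated three-element set, $(\mathcal P(\mathcal C),\mathcal B_k(\mathcal C))$ is a $2$-$(\frac{3^m-1}{2},k,\lambda)$ design with
\[
\lambda \;=\; \frac{A_k(\mathcal C)+A_k(\mathcal C_{\{0,1\}})}{2}-A_k(\mathcal C_{\{0\}}).
\]
Thus the only task left is to evaluate $\lambda$ in closed form for each of the three values of $k$. The plan is to read the three required frequencies off Table~\ref{table:3W} (for $A_k(\mathcal C)$), Table~\ref{table:3W-1} (for $A_k(\mathcal C_{\{0\}})$, furnished by Theorem~\ref{thm:sc1}) and Table~\ref{table:3W-2} (for $A_k(\mathcal C_{\{0,1\}})$, furnished by Theorem~\ref{thm:sc2}), substitute, and simplify.

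To keep the algebra transparent I would introduce the abbreviations $q=3^{m-1}$ and $r=3^{(m-1)/2}$, so that $3^m-1=3q-1$ and $3^{(m+1)/2}=3r$. Then each of the three frequencies becomes a short polynomial in $q$ and $r$. For example, at $k=3^{m-1}-3^{(m-1)/2}=q-r$ one has
\[
A_k(\mathcal C)=\tfrac12(q+r)(3q-1),\quad A_k(\mathcal C_{\{0\}})=\tfrac12(q+2r-1)(q+r),\quad A_k(\mathcal C_{\{0,1\}})=\tfrac16(q+2r-1)(q+3r),
\]
and a direct simplification yields $\lambda=\tfrac{q}{3}(q-r-1)=3^{m-2}(3^{m-1}-3^{(m-1)/2}-1)$, matching the first entry. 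The same calculation, carried out with the analogous frequencies, gives the claimed $\lambda$ for $k=3^{m-1}$ and (by the obvious $r\mapsto -r$ symmetry visible in the three tables) for $k=3^{m-1}+3^{(m-1)/2}$.

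As a sanity check for each of the three cases I would also verify the standard counting identity $\lambda\binom{\nu}{2}=\#\mathcal B_k(\mathcal C)\cdot\binom{k}{2}$, where by Lemma~\ref{lem:(q-1)d} (applicable since $m\ge 5$ makes $k\le 2(3^{m-1}-3^{(m-1)/2})-1$) we have $\#\mathcal B_k(\mathcal C)=A_k(\mathcal C)/2$; the values above make both sides equal. There is no genuine obstacle here, only bookkeeping: the only mild technicality is the $r\mapsto -r$ symmetry between the first and third entries, which one sees immediately by comparing rows $3^{m-1}\pm 3^{(m-1)/2}$ in each of Tables~\ref{table:3W}, \ref{table:3W-1}, \ref{table:3W-2}, so a single algebraic simplification suffices to cover both cases.
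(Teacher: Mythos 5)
Your proposal is correct and matches the paper's own (very terse) proof, which likewise derives the three $\lambda$ values by substituting the shortened-code frequencies from Theorems \ref{thm:sc1} and \ref{thm:sc2} into the formula of Theorem \ref{thm:3-2-designs}; your sanity check via $\lambda\binom{\nu}{2}=\#\mathcal B_k(\mathcal C)\binom{k}{2}$ and Lemma \ref{lem:(q-1)d} is exactly the paper's stated alternative route. The algebra you sketch (with $q=3^{m-1}$, $r=3^{(m-1)/2}$, $q=r^2$) does simplify to the claimed values, so there is no gap.
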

\begin{proof}
From Theorems \ref{thm:sc1} , \ref{thm:sc2}, and \ref{thm:3-2-designs}, this corollary   follows. Alternatively, the conclusions of this corollary follow from Theorem \ref{thm:3-2-designs}
and Lemma \ref{lem:(q-1)d}.
\end{proof}

\begin{remark}
Theorem \ref{thm:3-2-designs} confirmed Conjecture \ref{conj:3 design}, and extends
Conjecture \ref{conj:3 design} if more subcodes of $\mathcal C \left (0,1, \dots, \lfloor \frac{m}{2} \rfloor \right )$
with the weight distribution in Table  \ref{table:3W} exist.
\end{remark}

\begin{lemma}\label{lem:Ak=A-lam}
Let   $m\ge 3$ be an odd integer and $\mathcal C$ a subcode of $\mathcal C \left (0,1, \dots, \lfloor \frac{m}{2} \rfloor \right )$
with the weight distribution in Table  \ref{table:3W}. Let $A=(3^m-1)(3^{m}-3)$ and $\lambda_k^{\perp}=\frac{2k(k-1)A_k\left ( \mathcal C^{\perp} \right )}{(3^m-1)(3^m-3)}$ with $0\le k \le \frac{3^m-1}{2}-1$.
Then,  $\left (A_k\left ( \mathcal C^{\perp} \right ),  \lambda_k^{\perp} \right )$  is given in Table \ref{table:AkDC-lam}, where $k\in \{4,5,6,7\}$.
\begin{table}[htbp]
\centering
\caption{$A_k\left ( \mathcal C^{\perp} \right )$ and $\lambda_k^{\perp} $ for $4 \le k \le 7$}
\label{table:AkDC-lam}
\begin{tabular}{|c|c|c|}
  \hline
  k &  $\lambda_k^{\perp} $ &$A_k\left ( \mathcal C^{\perp} \right )$  \\
  \hline
 $4$ &  $1$ & $\frac{1}{8}A$\\
\hline
$5$ & $ 3^{m-1}-9    $ & $\frac{1}{40}A \lambda_5^{\perp}$ \\
\hline
$6$&
 $ \frac{3}{4}\left (3^{2m-2} -38\cdot 3^{m-2}+53 \right ) $& $\frac{1}{60}A \lambda_6^{\perp}$ \\
 \hline
 $7$& $\frac{1}{20}\left ( 3^{3m-2} -5\cdot 3^{2m}+1006 \cdot 3^{m-2}-1000 \right )      $ & $\frac{1}{84}A \lambda_7^{\perp}$\\
\hline
\end{tabular}
\end{table}

\end{lemma}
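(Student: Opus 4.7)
The plan is essentially a direct calculation: Lemma \ref{eq:dual-abc} already supplies an explicit closed form for $A_k(\mathcal{C}^{\perp})$ as a sum of four convolutions of binomial coefficients, so I would specialise it to $k \in \{4,5,6,7\}$ and simplify.

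For the base case, Lemma \ref{eq:dual-abc} directly gives $d(\mathcal{C}^{\perp}) = 4$ and yields the stated expression for $A_4^{\perp}$, from which $\lambda_4^{\perp} = 1$ follows by substitution into the definition $\lambda_k^{\perp} = 2k(k-1) A_k^{\perp} / ((3^m-1)(3^m-3))$. For $k \in \{5, 6, 7\}$, the formula of Lemma \ref{eq:dual-abc} expresses $3^{2m} A_k^{\perp}$ as a $(k+1)$-term sum indexed by $i$, with each term a product of two binomial coefficients weighted by one of the frequencies $a, b, c$, plus the all-ones contribution $2^k \binom{n}{k}$. Each such term is a polynomial in $3^m$ and $3^{(m-1)/2}$, so the resulting $A_k^{\perp}$ is again such a polynomial.

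The key simplification is the symmetry $3^{(m-1)/2} \mapsto -3^{(m-1)/2}$: this swaps the binomial pairs attached to $a$ and $c$ (the two extreme weight classes of Table \ref{table:3W}) while also swapping the values of $a$ and $c$ themselves. Consequently, after combining these two contributions all odd powers of $3^{(m-1)/2}$ cancel, and what remains is a polynomial purely in $3^m$, matching the shape of the entries in Table \ref{table:AkDC-lam}. Substituting $A_k^{\perp}$ into the defining formula then produces $\lambda_k^{\perp}$.

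The main obstacle is purely computational bookkeeping: expanding $\binom{3^{m-1} \pm 3^{(m-1)/2}}{i}$ and $\binom{(3^{m-1} \mp 2 \cdot 3^{(m-1)/2} - 1)/2}{k-i}$ as polynomials for $i, k-i \le 7$ and tracking the cancellations across the three weight classes. An equivalent but slightly cleaner route is to apply the Pless power moments (\ref{eq:PPM}) iteratively: for $t = 5, 6, 7$ the equation at $t$ determines $A_t^{\perp}$ from the already-known $A_0^{\perp}, \ldots, A_{t-1}^{\perp}$ together with $\sum_i i^t A_i$, and the latter sum again involves only the three nonzero weights of Table \ref{table:3W}. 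Either way the proof reduces to the same polynomial identity, which can be verified term by term for each of the four small values of $k$.
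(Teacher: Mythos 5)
Your proposal is correct and takes essentially the same approach as the paper, whose entire proof is the one line that the values follow from the closed formula of Lemma \ref{eq:dual-abc} by direct specialization to $k\in\{4,5,6,7\}$ (your Pless-power-moment alternative is equivalent, since that is how Lemma \ref{eq:dual-abc} is obtained). One caveat worth flagging: the table's entry $\tfrac{1}{8}A$ for $A_4(\mathcal C^{\perp})$ is inconsistent with $\lambda_4^{\perp}=1$ under the stated definition and appears to be a typo for $\tfrac{1}{24}A=(3^m-1)(3^{m-1}-1)/8$, which is the value Lemma \ref{eq:dual-abc} actually gives.
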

\begin{proof}
The desired conclusions follow from Lemma \ref{eq:dual-abc}.
\end{proof}

\begin{theorem}\label{thm:dual-design4567}
Let   $m \ge 5$ be an odd integer and $\mathcal C$ a subcode of $\mathcal C \left (0,1, \dots, \lfloor \frac{m}{2} \rfloor \right )$
with the weight distribution in Table  \ref{table:3W}. Let $k\in \{4,5,6,7\}$. Then $\left (\mathcal P(\mathcal C^{\perp}), \mathcal B_{k}(\mathcal C^{\perp}) \right )$
is a $2$-$(\frac{3^m-1}{2}, k, \lambda^{\perp}_k)$ design, where $\lambda^{\perp}_k$ is given in Table \ref{table:AkDC-lam}.
\end{theorem}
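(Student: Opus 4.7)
The plan is to handle $k=4$ separately and then treat $k\in\{5,6,7\}$ by a uniform inclusion-exclusion argument. For $k=4$, Corollary \ref{cor:steiner} already establishes that the blocks of weight-$4$ codewords form the Steiner system $S(2,4,(3^m-1)/2)$, which is the design with $\lambda_4^\perp = 1$ recorded in Table \ref{table:AkDC-lam}.

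For $k\in\{5,6,7\}$, the crucial input is that $d(\mathcal{C}^\perp)=4$ by Lemma \ref{eq:dual-abc} and $k$ lies in the interval $[d,2d-1]=[4,7]$, so Lemma \ref{lem:(q-1)d} applies. Two weight-$k$ codewords with the same support are therefore scalar multiples of each other, and each block $B\in\mathcal{B}_k(\mathcal{C}^\perp)$ arises from exactly the two codewords $\pm\mathbf{c}$. Consequently, the number of blocks through a given pair $\{i,j\}$ of coordinates is $\tfrac{1}{2}|W_k(\mathcal{C}^\perp,\{i,j\})|$, and inclusion-exclusion on the events ``$c_i=0$'' and ``$c_j=0$'' yields
\[
|W_k(\mathcal{C}^\perp,\{i,j\})| = A_k(\mathcal{C}^\perp) - A_k\bigl((\mathcal{C}^\perp)_{\{i\}}\bigr) - A_k\bigl((\mathcal{C}^\perp)_{\{j\}}\bigr) + A_k\bigl((\mathcal{C}^\perp)_{\{i,j\}}\bigr).
\]

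What remains is to show that each $A_k\bigl((\mathcal{C}^\perp)_T\bigr)$ depends only on $|T|$. By (\ref{eq:sh-puct}) we have $(\mathcal{C}^\perp)_T=(\mathcal{C}^T)^\perp$, and the weight distributions of the punctured codes $\mathcal{C}^{\{i\}}$ and $\mathcal{C}^{\{i,j\}}$ given in Theorems \ref{thm:pc-1} and \ref{thm:pc-2} are manifestly independent of the positions chosen. The MacWilliams identity (equivalently, the Pless power moments (\ref{eq:PPM}), using that shortening cannot decrease minimum distance so $A_1=A_2=A_3=0$ for $(\mathcal{C}^\perp)_T$) then determines the weight distribution of the dual $(\mathcal{C}^\perp)_T$ from that of $\mathcal{C}^T$, so every $A_k((\mathcal{C}^\perp)_T)$ is a fixed constant depending only on $|T|$. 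Plugging this into the displayed identity shows $|W_k(\mathcal{C}^\perp,\{i,j\})|$ is the same for every pair, hence $(\mathcal{P}(\mathcal{C}^\perp),\mathcal{B}_k(\mathcal{C}^\perp))$ is a $2$-design.

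The value of $\lambda_k^\perp$ is then forced by the standard design identity $b\binom{k}{2}=\binom{\nu}{2}\lambda_k^\perp$ with $\nu=(3^m-1)/2$ and total block count $b=A_k(\mathcal{C}^\perp)/2$, yielding $\lambda_k^\perp = 2k(k-1)A_k(\mathcal{C}^\perp)/[(3^m-1)(3^m-3)]$, exactly the formula of Lemma \ref{lem:Ak=A-lam}. The only mildly laborious step is the MacWilliams inversion used to confirm $T$-independence of $A_k((\mathcal{C}^\perp)_T)$; note that the explicit formulas for these quantities are not needed, only the fact that they are functions of $|T|$ alone. The bound $k\le 7 = 2d-1$ is essential for Lemma \ref{lem:(q-1)d} to equate blocks with $\pm$-pairs of codewords, which explains why this argument does not extend immediately to larger $k$ in Conjecture \ref{conj:many designs}.
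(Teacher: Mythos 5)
Your proposal is correct and follows essentially the same route as the paper: Lemma \ref{lem:(q-1)d} with $k\le 2d(\mathcal C^{\perp})-1=7$ to identify blocks with $\pm$-pairs, inclusion–exclusion for $\# W_k(\mathcal C^{\perp},\{i,j\})$, position-independence of $A_k((\mathcal C^{\perp})_T)=A_k((\mathcal C^{T})^{\perp})$ via (\ref{eq:sh-puct}), Theorems \ref{thm:pc-1}–\ref{thm:pc-2} and MacWilliams, and the standard counting identity for $\lambda_k^{\perp}$. The only cosmetic difference is that you split off $k=4$ via Corollary \ref{cor:steiner}, whereas the paper's uniform argument already covers that case.
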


\begin{proof}
Let $4\le k \le 7$ and $0\le i <j \le \frac{3^m-1}{2}-1$.
Define
\begin{align*}
\mathcal B_k \left ( \mathcal C^{\perp}, \{i,j\} \right )= \left \{ \mathrm{Supp}(\mathbf c):  \mathbf c \in W_k\left (\mathcal C^{\perp}, \{i,j\} \right )\right  \},
\end{align*}
where $W_k\left (\mathcal C^{\perp}, \{i,j\} \right )$ was defined in Lemma \ref{lem:W-C-T}.
It follows from Lemma \ref{eq:dual-abc} that $d(\mathcal C^{\perp})=4$ and $k\le 2d(\mathcal C^{\perp})-1$. From Lemma \ref{lem:(q-1)d}, we get
\begin{align*}
\# \mathcal B_k \left ( \mathcal C^{\perp}, \{i,j\} \right )=& \frac{1}{2} \# W_k\left (\mathcal C^{\perp}, \{i,j\} \right ).
\end{align*}
Using the inclusion-exclusion principle, one has
\begin{align}\label{eq:dual-IEP}
\# W_k\left (\mathcal C^{\perp}, \{i,j\} \right )=A_k \left (\mathcal C^{\perp} \right )-
A_k \left ( \left ( \mathcal C^{\perp} \right )_{\{i\}} \right )-A_k \left ( \left ( \mathcal C^{\perp} \right )_{\{j\}} \right )+
A_k \left ( \left ( \mathcal C^{\perp} \right )_{\{i,j\}} \right ).
\end{align}
From Theorems \ref{thm:pc-1} and \ref{thm:pc-2}, for any $0\le i \le \frac{3^m-1}{2}-1$, one has
\begin{align*}
A_i \left ( \mathcal C^{ \{j_0\} } \right )= A_i\left ( \mathcal C^{\{0\}} \right ), \  A_i \left ( \mathcal C^{\{j_0,j_1\}} \right )= A_i\left ( \mathcal C^{\{0,1\}} \right ).
\end{align*}
Using the MacWilliams Identity, one gets
\begin{align*}
A_i \left ( \left ( \mathcal C^{ \{j_0\} } \right )^{\perp}\right )= A_i\left ( \left ( \mathcal C^{\{0\}} \right )^{\perp} \right ), \
 A_i \left ( \left ( \mathcal C^{\{j_0,j_1\}} \right )^{\perp} \right )= A_i\left ( \left ( \mathcal C^{\{0,1\}}\right )^{\perp} \right ),
\end{align*}
where $0\le i \le \frac{3^m-1}{2}-1$.
From (\ref{eq:sh-puct}),  $\left ( \mathcal C^{\perp} \right )_{\{T\}}=\left (\mathcal C^{T} \right )^{\perp}$ for any $T\subseteq  \{0,1, \dots, \frac{3^m-1}{2}-1\}$.
Thus,
\begin{align*}
A_i \left ( \left ( \mathcal C ^{\perp}\right )  _{ \{j_0\} } \right )= A_i\left ( \left ( \mathcal C ^{\perp}  \right ) _{\{0\}} \right ), \
 A_i \left ( \left ( \mathcal C ^{\perp} \right )  _{\{j_0,j_1\}} \right )= A_i\left ( \left ( \mathcal C  ^{\perp} \right ) _{\{0,1\}} \right ),
\end{align*}
where $0\le i \le \frac{3^m-1}{2}-1$.
From (\ref{eq:dual-IEP}), one obtains
\begin{align*}
\# W_k\left (\mathcal C^{\perp}, \{i,j\} \right )=A_k \left (\mathcal C^{\perp} \right )-
2A_k \left ( \left ( \mathcal C^{\perp} \right )_{\{0\}} \right )+
A_k \left ( \left ( \mathcal C^{\perp} \right )_{\{0,1\}} \right ).
\end{align*}
Then,
\begin{align*}
\# \mathcal B_k \left ( \mathcal C^{\perp}, \{j_0,j_1\} \right )=& \frac{A_k \left (\mathcal C^{\perp} \right )+
A_k \left ( \left ( \mathcal C^{\perp} \right )_{\{0,1\}} \right )}{2}- A_k \left ( \left ( \mathcal C^{\perp} \right )_{\{0\}} \right ).
\end{align*}
Therefore, $\# \mathcal B_k \left ( \mathcal C^{\perp}, \{j_0,j_1\} \right )$ is independent of $j_0$ and $j_1$. Hence, the codewords of weight $k$ hold a
$2$-$(\frac{2^m-1}{3},k, \lambda^{\perp}_k)$ design with $A_k \left (\mathcal C^{\perp} \right )$ blocks.
Thus,
\begin{align*}
\lambda_k^{\perp}= \frac{2k(k-1)A_k\left ( \mathcal C^{\perp} \right )}{(3^m-1)(3^m-3)}.
\end{align*}
The desired conclusions finally follow from Lemma \ref{lem:Ak=A-lam}.
\end{proof}

\begin{remark}
Theorem \ref{thm:dual-design4567} confirmed Conjecture \ref{conj:many designs} for $k=4,5,6$ and $7$.
\end{remark}

\section{Summary}

In this paper, from a ternary linear subcode  of $\mathcal C(0,1,\dots, \frac{m-1}{2})$
with the weight distribution in Table \ref{table:3W}, some infinite families of $2$-designs
with various block sizes were confirmed and their parameters were settled.
Notice that the automorphism group of a ternary code with the weight distribution
in Table \ref{table:3W} is in general neither $2$-transitive nor $2$-homogeneous. In
addition, such codes do not satisfy the conditions in the Assmus-Mattson Theorem.
Hence, we had to use a direct approach to proving that the codes hold $2$-designs.
This makes the
designs presented in this paper very interesting, as such non-symmetric designs
are rare.

Another contribution of this paper is the construction of three-weight, six-weight and
nine-weight ternary linear codes by shortening and puncturing the ternary codes with the weight distribution in Table \ref{table:3W}. The parameters of these codes appear to be new.
These codes include  some optimal codes with the same parameters as the best ternary linear codes  known in the database maintained  by Markus Grassl at http://www.codetables.de/.
These ternary codes  can be employed to obtain secret sharing schemes with interesting
access structures using the framework  in \cite{ADHK98,YD06}.

It is still open if a ternary linear code with the weight distribution in Table $\ref{table:3W}$ is always equivalent to a subcode of $\mathcal C(0,1,\dots, \frac{m-1}{2})$.

\textbf{Acknowledgements}
C. Tang was supported by National Natural Science Foundation of China (Grant No.
11871058) and China West Normal University (14E013, CXTD2014-4 and the Meritocracy Research
Funds).
C. Ding was supported by The Hong Kong Research Grants Council, Project No. 16300418.
M. Xiong was supported by The Hong Kong Research Grants Council, Project No. NHKUST619/17.



\end{document}